\def\url@leostyle{%
 \@ifundefined{selectfont}{\def\UrlFont{\sf}}{\def\UrlFont{\scriptsize\ttfamily}}} \makeatother\urlstyle{leo}
\definecolor{Red}{rgb}{0.8,0,0.1}
\newsavebox{\mybox}
\definecolor{RED}{RGB}{255,0,0}
\newcommand{\1}{\mathbbm{1}}            
\newcommand{\abs}[1]{\left\vert#1\right\vert}   
\newcommand{\wsum}{\mathcal{E}}			
\DeclareMathOperator{\Cov}{Cov}          
\def\bN{\mathbb{N}}
\def\bE{\mathbb{E}}
\def\bR{\mathbb{R}}
\newtheorem{theorem}{Theorem}
\newtheorem{lemma}[theorem]{Lemma}
\newtheorem{corollary}[theorem]{Corollary}
\theoremstyle{definition}
\theoremstyle{remark}
\newtheorem{remark}[theorem]{Remark}
\numberwithin{equation}{section}
\numberwithin{theorem}{section}
\tiny\color{gray},  
\author{Damian Jelito \and Marcin Pitera}
\address{Institute of Mathematics, Jagiellonian University, \L{}ojasiewicza 6, 30-348 Cracow, Poland}
\email{\url{damian.jelito@im.uj.edu.pl}}
\email{\url{marcin.pitera@im.uj.edu.pl}}
\begin{document}

\title[New fat-tail normality test based]{New fat-tail normality test based on conditional second moments with applications to finance}


\begin{abstract} 
In this paper we introduce an efficient fat-tail measurement framework that is based on the conditional second moments.  We construct a goodness-of-fit statistic that has a direct interpretation and can be used to assess the impact of fat-tails on central data conditional dispersion. Next, we show how to use this framework to construct a powerful normality test. In particular, we compare our methodology to various popular normality tests, including the Jarque--Bera test that is based on third and fourth moments, and show that in many cases our framework outperforms all others, both on simulated and market stock data. Finally, we derive asymptotic distributions for conditional mean and variance estimators, and use this to show asymptotic normality of the proposed test statistic. 
\vspace{0.2cm}

\noindent {\it Keywords:}  20-60-20 rule, normality test, fat-tails, non-normality, stock returns.\\
\noindent {\it MSC2010:}  62F03, 62F05, 62P05, 62P20, 91G70.
\end{abstract}
\vspace{-5cm}
\maketitle

\section{Introduction}

It has been recently shown in \cite{JawPit2015} that for a normal random variable and a unique ratio close to 20/60/20 the conditional dispersion in the tail sets is the same as in the central set. In other words, if we split big normal sample into three sets -- one corresponding to worst 20\% outcomes, one corresponding to the middle 60\% outcomes, and one corresponding to best 20\% outcomes -- then the conditional variance on those subsets is approximately the same.

In this paper we show that this property could be used to construct an efficient goodness-of-fit testing framework that has a direct (financial) interpretation. The impact of tail dispersion on central dispersion is a natural measure of tail heaviness and can serve as an alternative to other methods which are typically based on tail limit analysis or higher order moments; see \cite{Car2009b} and \cite{JarBer1980}. In particular, in contrast to the Jarque--Bera normality test that is based on third and fourth moments, our test relies on the conditional second moments which are often easier to estimate.

Testing for normality has a long history and many remarkable methods have been developed. This includes general distribution-fit frameworks like Anderson--Darling test based on the distance between theoretical and empirical distribution function \citep{AndersonDarling1954} or Shapiro--Wilk test relying on the regression coefficient \citep{ShapiroWilk1965}; see \cite{Mad2012}, \cite{Henze2002} or \cite{Thode2002} for a comprehensive overview of normality testing procedures. 

Most empirical studies suggest that the normality tests should be chosen carefully as their statistical power varies depending on the context; see e.g. \cite{ThadBun2007}, \cite{RomDelCost2010} or \cite{BroDavis2016}. This is why the existing procedures are constantly refined and new ones are being developed; for example, a recent revision of the Jarque--Bera testing framework based on the second-order analogue of skewness and kurtosis could be found in \cite{DesLafMich2018}.

The approach presented in this paper draws attention to interesting and previously not exploited aspect of normal distributions that could be used for efficient normality testing. In particular, we show that our approach outperforms and/or complements multiple benchmark normality testing frameworks when popular (financial) alternative distributions, such as Student's t or logistic, are considered. More explicitly, we show that our test usually has the best power if one expects that a sample comes from a symmetric distribution which have heavier (or lighter) tails in comparison with the normal distribution. We illustrate that with a financial market data example; see Section~\ref{S:2} for details.

Finally, it is worth mentioning that the 20/60/20 division leads to a very accurate data clustering when it comes to the tail assessment performed in reference to the central set normality assumption. Consequently, our method could be embedded into data analytics frameworks based on cluster analysis, help to refine data mining techniques, etc; see \cite{Rom2003,KauRou2009,HaiBlaBabAnd2013} for an overview.  In fact, the good performance of our test statistic on market data could be linked to a popular financial \textit{stylised fact} saying that typical financial asset returns can be seen as normal, but the extreme returns are more frequent and with greater magnitude than the ones resulting from the normal fit; see \cite{Cont2001} and \cite{SheQia2010} for details. 

This paper is organised as follows: In Section~\ref{S:206020_Rule} we briefly recall the concept of the 20-60-20 Rule, while in Section~\ref{S:stat.test} we outline the construction of the test statistic and discuss its basic properties. Section~\ref{S:power} provides a high-level discussion about the test power, and Section~\ref{S:asymptotic} discusses in details mathematical background including derivation of the asymptotic distribution of the proposed test statistic. Next, in Section~\ref{S:2}, we present a simple market data case-study and discuss application of our framework in the financial context. We conclude in Section~\ref{S:conculde}. For brevity, we moved the closed-form formula for the normalising constant introduced in Section~\ref{S:stat.test} to Appendix~\ref{A:rho}.


\section{The 20-60-20 Rule for the univariate normal distribution}\label{S:206020_Rule}
Let us assume that $X$ is a normally distributed random variable. We define left, middle, and right partitioning sets of $X$ by
\begin{equation}\label{eq:LRM} 
L:= \left(-\infty, F_X^{-1}(0.2)\right],\, M:= \left(F_X^{-1}(0.2), F_X^{-1}(0.8)\right),\,R:= \left[F_X^{-1}(0.8),+\infty\right),
\end{equation}
where $F_X^{-1}(\alpha)$ is the $\alpha$-quantile of $X$. It has been shown in~\cite{JawPit2015} that
\begin{equation}\label{rq:LMR}
\sigma^2_L =\sigma^2_M=\sigma^2_R,
\end{equation}
for this unique 20/60/20 ratio, where $\sigma^2_A$ denotes the conditional variance of $X$ on set $A$\footnote{In fact, this equality is true for the ratio very close to 20/60/20, i.e. for upper and lower quantiles equal to approximately $0.198$. For transparency, we have decided to use the rounded numbers here; see Section~\ref{S:asymptotic} for details.}.

This specific division together with the associated set of equalities in \eqref{rq:LMR} create a dispersion balance for the conditioned populations. This property might be linked to the statistical phenomenon known as the {\it 20-60-20 Rule}: a principle that is widely recognised by the practitioners and used e.g. for efficient management or clustering. In fact, a similar statement is true in the multivariate case: the conditional covariance matrices of multivariate normal vector are equal to each other, when the conditioning is based on the values of any linear combination of the margins, and 20/60/20 ratio is maintained. For more details, see \cite{JawPit2015} and references therein.


\section{Test statistic}\label{S:stat.test}
Let us assume we have a sample from $X$ at hand. Then, based on~\eqref{rq:LMR}, we define a test statistic
\begin{equation}\label{eq:N}
N:= \frac{1}{\rho}\left(\frac{\hat\sigma^2_L-\hat\sigma^2_M}{\hat\sigma^2} +\frac{\hat\sigma^2_R-\hat\sigma^2_M}{\hat\sigma^2}\right)\sqrt{n}\,,
\end{equation}
where $\hat\sigma^2$ is the sample variance, $\hat\sigma^2_A$ is the conditional sample variance on set $A$ (where the conditioning is based on empirical quantiles), $n$ is the sample size, and $\rho\approx 1.8$ is a fixed normalising constant; see Figure~\ref{fig:Rcode} for the {\bf R} implementation code. We refer to Section~\ref{S:asymptotic} for more details including rigorous definitions of conditional variance $\hat\sigma^2_A$, constant $\rho$, etc.

\begin{lrbox}{\mybox}
\begin{lstlisting}
Test.N <- function(x){
	  n  <- length(x)
	  q1 <- quantile(x,0.2)
  	q2 <- quantile(x,0.8)
  	low  <- x[x <= q1]
 	 med  <- x[x > q1 & x < q2]
  	high <- x[x >= q2]
  	N <- var(low)+var(high)-2*var(med)
  	N <- N * sqrt(n)/(var(x)*1.8)
 return(N)}
\end{lstlisting}
\end{lrbox}

\begin{figure}[h]
\hspace{-4cm}
\scalebox{0.7}{\usebox{\mybox}}
\caption{Simplified {\bf R} source code that was used to create a function that computes the test statistic $N$ given input sample $x$}\label{fig:Rcode}
\end{figure}

It is not hard to see that under the normality assumption $N$ is a pivotal quantity. Furthermore, in Section~\ref{S:asymptotic} we show that the distribution of $N$ is asymptotically normal; see Theorem \ref{th:asymptotic_distr_test_stat} therein. In Figure~\ref{F:fig2}, we illustrate this by computing the Monte Carlo density of $N$ under the normality assumption for samples of size $50$, $100$, and $250$.

\begin{figure}[htp!]
\begin{center}
\begin{minipage}{.40\linewidth}\centering
\includegraphics[width=0.95\textwidth]{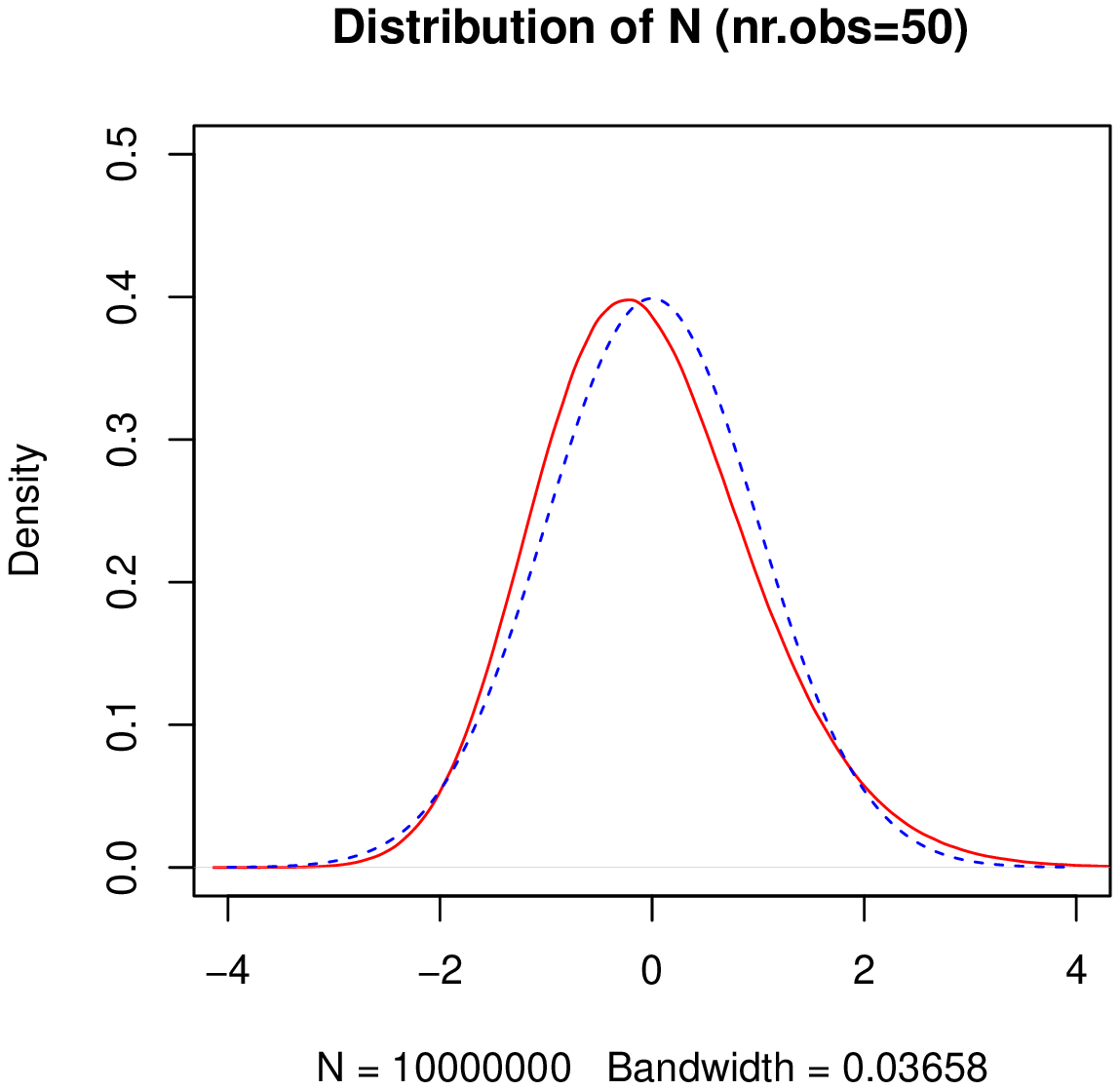}
\end{minipage}
\begin{minipage}{.40\linewidth}\centering
\includegraphics[width=0.95\textwidth]{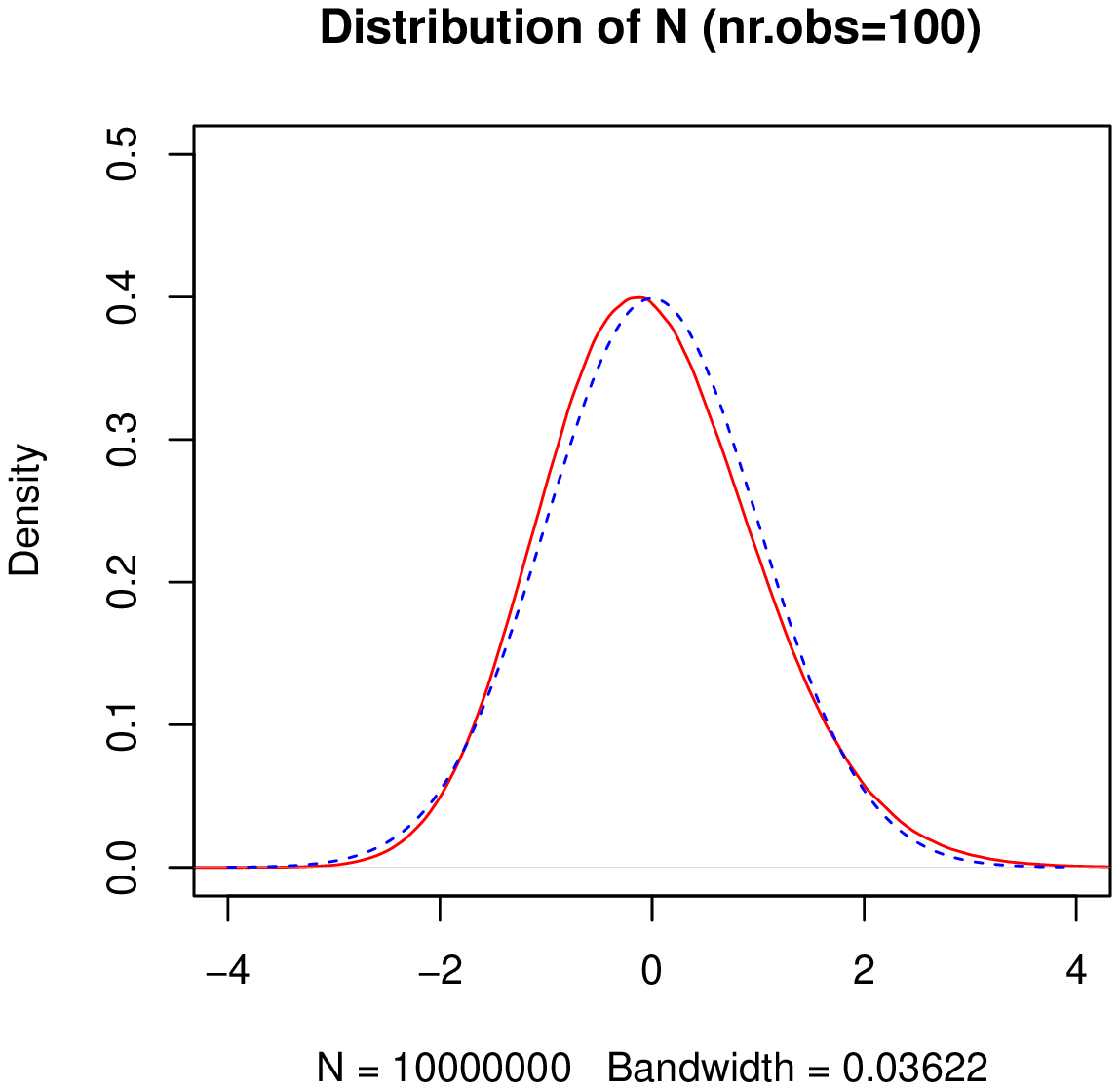}
\end{minipage}\\
\begin{minipage}{.40\linewidth}\centering
\includegraphics[width=0.95\textwidth]{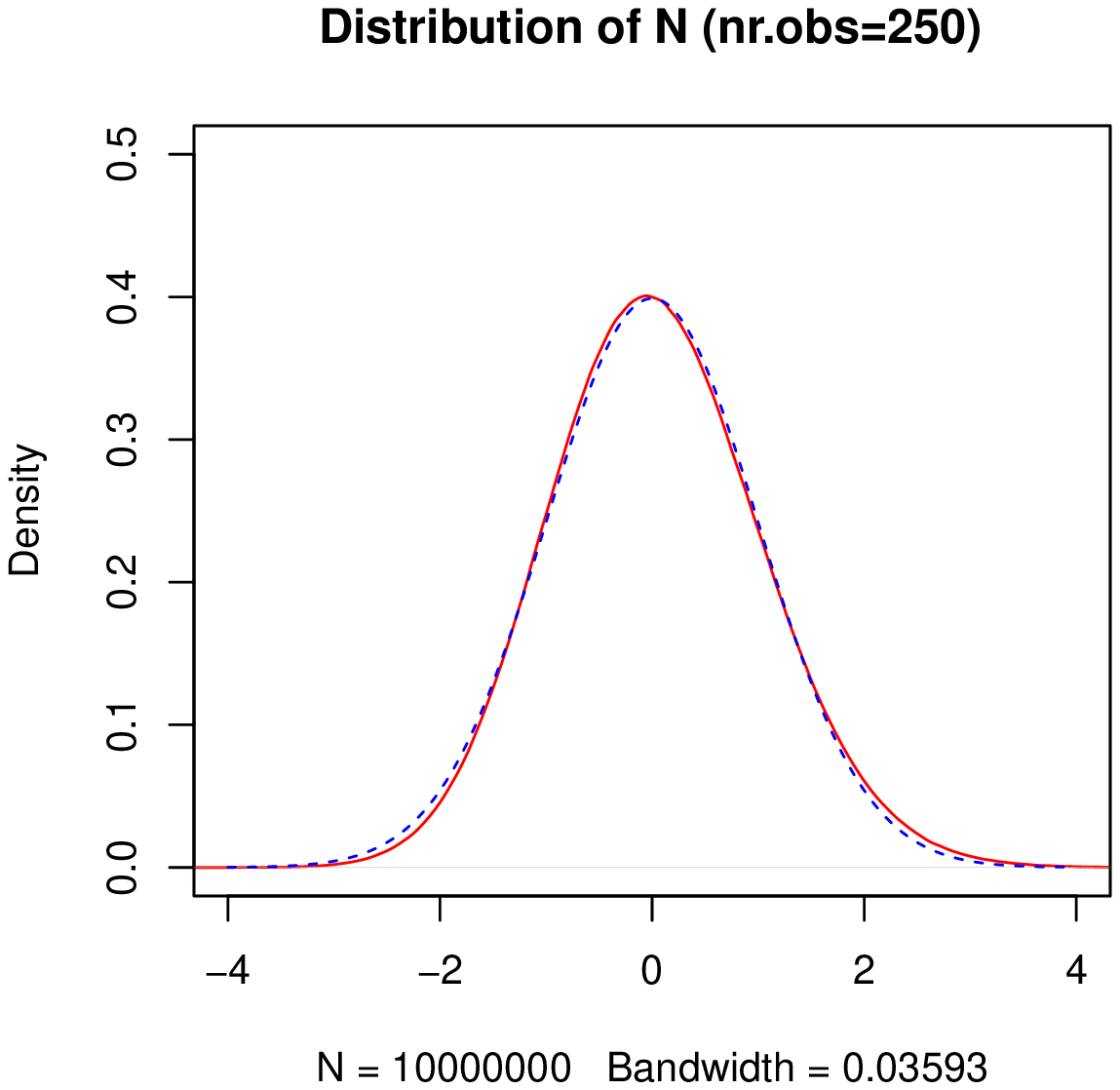}
\end{minipage}
\begin{minipage}{.40\linewidth}\centering
\end{minipage}
\begin{minipage}{.40\linewidth}\centering
\scalebox{0.70}{
\begin{tabular}{c|c|c|c}\toprule
$\alpha$ &  $n$ & $\Phi^{-1}(1-\alpha)$  &  $F^{-1}_{n}(1-\alpha)$ \\ \midrule
               &   50  &   		       		   & 2.68 \\
1.0\%      & 100  & 2.33    		           &  2.57\\
               & 250  &    		                   &  2.51\\\midrule
               &   50  &   		       		   & 2.18 \\
2.5\%      & 100  & 1.96    		           &  2.12\\
               & 250  &    		                   &  2.09\\\midrule
               &   50  &   		       		   & 1.77 \\
5.0\%      & 100  & 1.64    		           &  1.74\\
               & 250  &    		                   &  1.74\\\bottomrule
\end{tabular}
}
\end{minipage}
\end{center}
\caption{The distribution of $N$ under the normality assumption for $n=50,100,250$, for strong Monte Carlo sample of size 10\,000\,000. The obtained empirical density (solid curve) is very close to standard normal density (dashed curve); the table compares selected empirical quantiles with theoretical normal quantiles.}
\label{F:fig2}
\end{figure}

Test statistic $N$ has a clear interpretation: the difference between tail and central conditional variances could be seen as a measure of tail fatness, i.e. the bigger the value of $N$, the fatter the tails.

To illustrate this we compute the values of $N$ for bigger sample size, $n=500$, for three different fat-tail distributions and three different slim-tail distributions.
For fat-tail comparison we picked logistic, Student's t with five degrees of freedom, and Laplace distributions, while for slim-tail comparison we considered generalised normal distribution with shape parameter $s\in \{2.5, 3, 5\}$; the (standardised) generalised normal density for $s\in\bR_{+}$ is given by
\begin{equation}\label{eq:GN_density}
f(x|s):=\frac{s}{2\Gamma(1/s)}\exp\left(-|x|^{s}\right),\quad x\in\bR;
\end{equation}
we refer to \cite{Nad2005} or \cite{TumKeaBal2016} for more details. The results presented in Figure~\ref{F:candle} confirm that the behaviour of $N$ is as expected.

Based on statistic $N$ values, one can construct a one-sided or two-sided statistical test with normality ($N=0$) as a null hypothesis. For brevity, we refer to such test as {\it N normality test} or simply {\it N test}.

\begin{figure}[htp!]
\begin{center}
\includegraphics[width=0.48\textwidth]{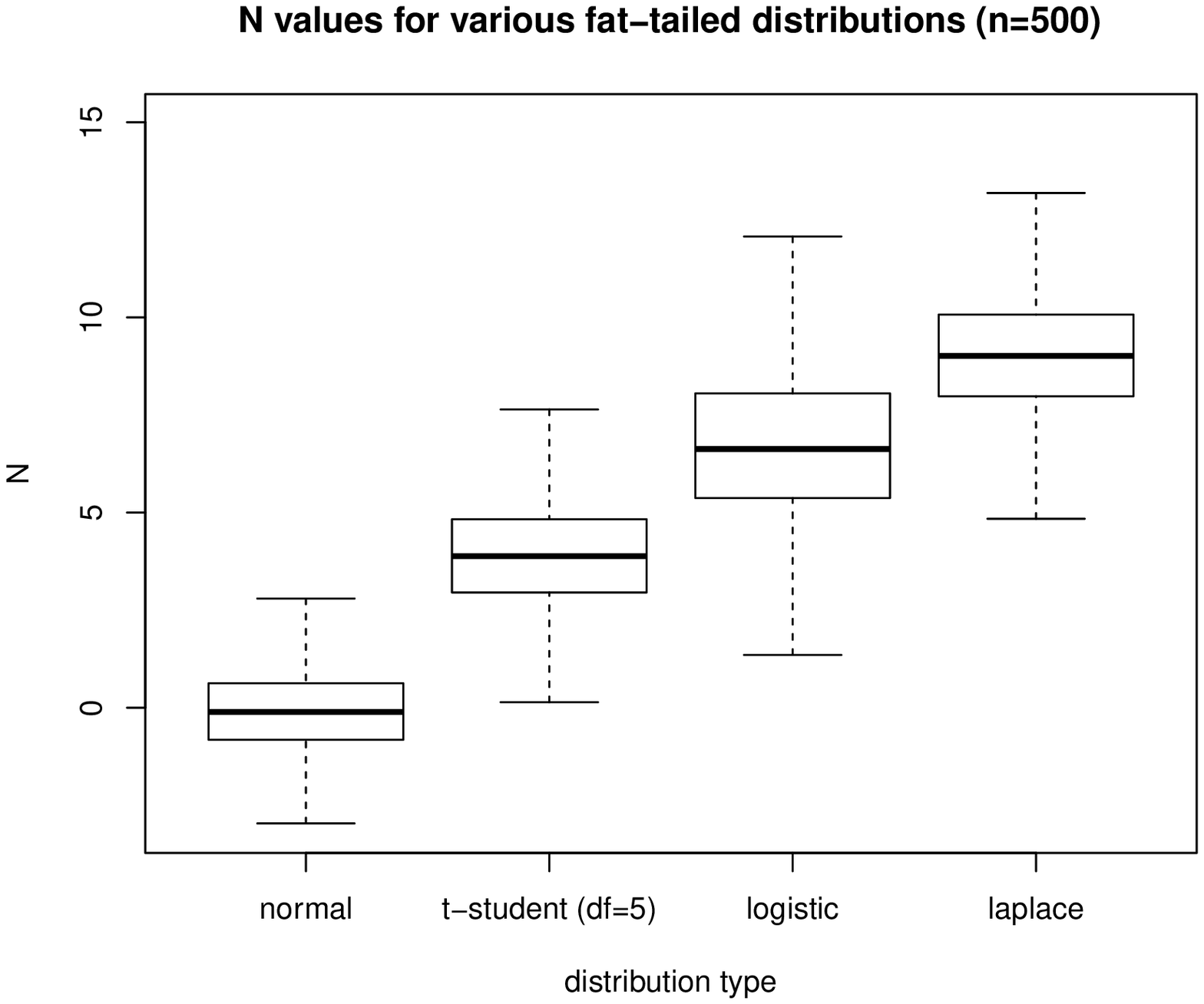}
\includegraphics[width=0.48\textwidth]{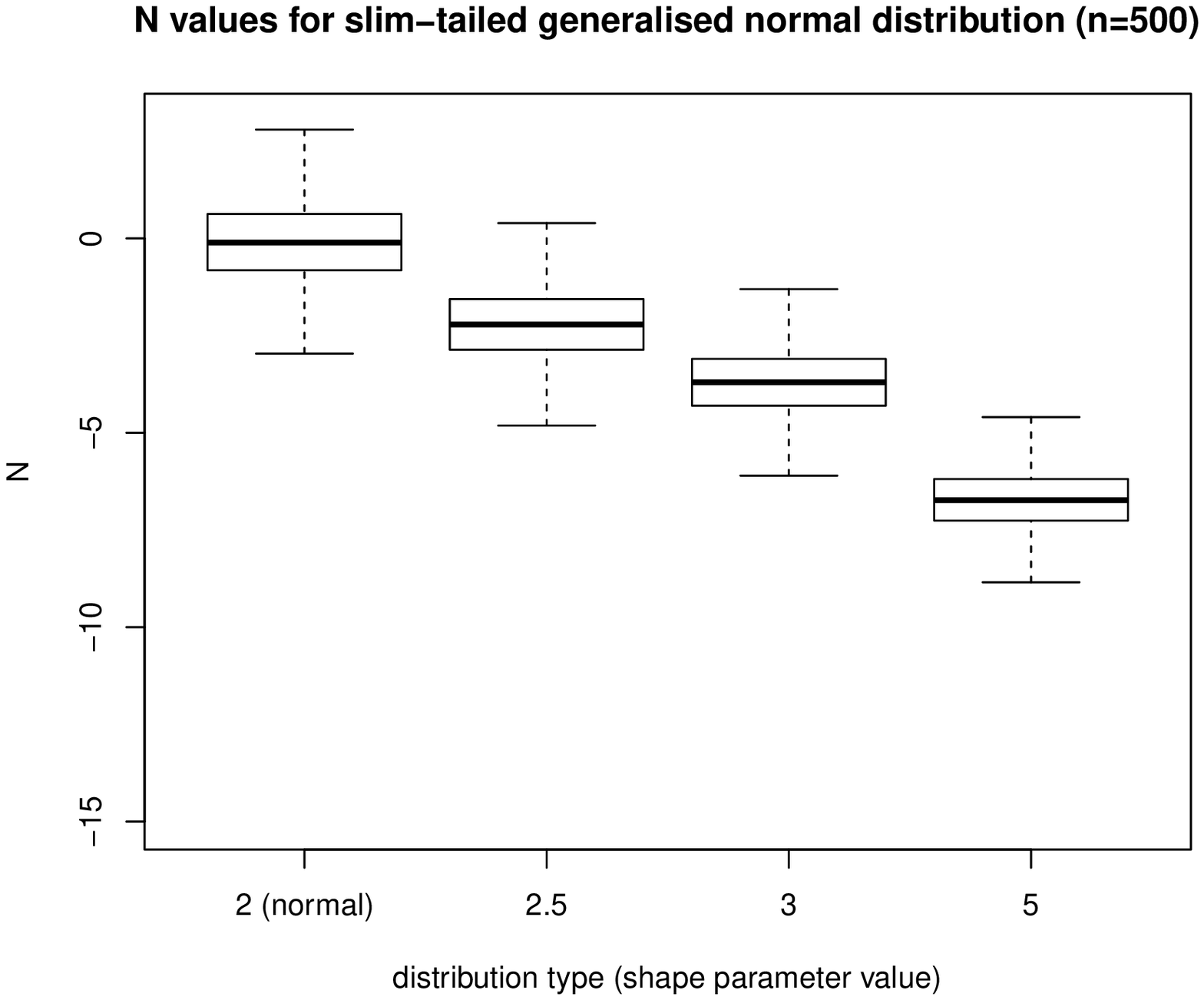}
\end{center}
\caption{Boxplots of $N$ for samples from different distributions. The results for three fat-tailed distributions (Logistic, t-Student with $v=5$, and Laplace) are presented in the left plot. The results for three slim-tailed distributions (generalised normal with $s\in\{2.5,3,5\}$) are presented in the right plot. For transparency, we rescaled all distributions to have zero mean and unit variance, and added the results for normal distribution. The boxplots are based on strong Monte Carlo sample of size 10\,000, where each simulation is of size $n=500$.}
\label{F:candle}
\vspace{-0.5cm}
\end{figure}


\section{Power of the test}\label{S:power}

In this section we check the power of the proposed $N$ test in a controlled environment. We focus on symmetric distributional alternatives (used e.g. in finance) when one wants to abandon the normality assumption due to fat-tail or slim-tail phenomena. Namely, we consider the Cauchy distribution, the Logistic distribution, the Laplace distribution, the Student's t distribution with $v\in\{2, 5, 10, 20, 30\}$ degrees of freedom parameter, and the generalised normal distribution (GN) with the shape parameter $s\in \{1.5,2.5,3,5,10\}$. Note that GN with $s=1$ and $s=2$ correspond to Laplace and normal distribution, respectively; see \eqref{eq:GN_density} for the GN density definition. In all cases, the location parameter is set to $0$ and the scale parameter is set to $1$.

For completeness, we compare test $N$ results with well-established benchmark normality tests: Jarque--Bera test, Anderson--Darling test, and Shapiro--Wilk test. It should be noted that in contrast to these frameworks, statistic $N$ allows one to consider a specific heavy-tail (or slim-tail) alternative, i.e. positive (or negative) values of $N$ point out to heavy-tail (or slim-tail) alternative hypothesis. Consequently, we decided to construct right-sided (or left-sided) critical region for the fat-tailed (or slim-tailed) distributions. Nevertheless, for completeness, we include the results for two-sided critical regions in all cases.

For all alternative distribution choices we consider four different sample sizes, i.e. $n=20,50,100,250$. For each $n$, we simulate 2\,000\,000 strong Monte Carlo sample, and check for what proportion of simulations the tests reject normality at significance level $\alpha=5\%$.

All computations are performed in {\bf R 3.5.2}. For benchmark normality testing we use multiple add-on {\bf R} packages including \textit{gnorm} (for GN simulation), \textit{stats} (for Shapiro--Wilk test), \textit{nortest} (Anderson--Darling test), and \textit{tseries} (Jarque--Bera test). For better comparability, we used test-wise simulated rejection thresholds instead of theoretical $p$-values returned by {\bf R} functions; for computations, we used big strong Monte Carlo sample of size 10\,000\,000. In particular, note that while Jarque--Bera test has asymptotic $\chi^2$ distribution (with 2 degrees of freedom) under normality, this approximation may be inaccurate for small samples, and lead to non-meaningful (non-adjusted) p-values. 

It should be noted that our framework specification is consistent with the one presented in \cite{DesLafMich2018}, where a comprehensive normality tests comparison is made. In particular, results presented in Appendix C therein are perfectly consistent with results presented here (for all benchmark tests).\footnote{Please note that results presented for  Asymmetric Power Distribution (APD) with symmetry parameter $\alpha=0.5$ correspond to results presented for Generalised Normal (GN) distribution; we refer to Section 2 in \cite{DesLafMich2018} for details.}

For transparency, we have decided to consider fat-tail and slim-tail case separately. The results for fat-tailed distributions are presented in Table~\ref{T:tab3}. The best performance of the right-sided test $N$ could be observed for almost all considered distributions. The fatter the tails the bigger the absolute difference between $N$ test power and JB test power, which could be considered as a second best choice. To check whether test statistic $N$ brings some novel results, we decided to check the proportion of simulations on which the normality assumption was rejected uiquely by $N$ among all considered tests; for comparison purposes, we checked the same for all other tests. The results for three selected distributions are presented in Table~\ref{T:tab3b}. It can be observed, that the unique rejection proportion for test $N$ is the highest among all tests and point out to the fact that statistic $N$ is taking into account sample properties that are not exploited by other tests.\footnote{The results for other distributions are consistent with those presented in Table~\ref{T:tab3b} and could be obtained from authors upon request.} Finally, it should be noted that the performance of two-sided $N$ test is also quite good: the test outperforms both AD and SW in most of the considered cases.

The results for slim-tailed distributions are presented in Table~\ref{T:tab3c}. Both left-sided and two-sided test based on $N$ statistic substantially outperform all other tests on all datasets. This suggest that the proposed framework is also suitable when assessing slim-tailed distributions.

\begin{table}[htp!]
\centering
\scalebox{0.68}{
\begin{tabular}{c|c|rrrrr}
\toprule
Distr & n & JB & AD & SW & N (two-sided) & N (right-sided)\\ \midrule
\multirow{4}{*}{Cauchy} & 20 & 86.2\% & 88.0\% & 86.6\% & 85.5\% & {\bf 89.0}\%\\
 & 50 & 99.5\% & 99.7\% & 99.6\% & 99.7\% & {\bf 99.8}\%\\
 & 100 & {\bf 100.0}\% & {\bf 100.0}\% & {\bf 100.0}\% & {\bf 100.0}\% & {\bf 100.0}\% \\
 & 250 & {\bf 100.0}\% & {\bf 100.0}\% & {\bf 100.0}\% & {\bf 100.0}\% & {\bf 100.0}\%\\ \midrule
\multirow{4}{*}{Logistic} & 20 & 14.9\% & 10.6\% & 11.7\% & 11.1\% & {\bf 15.6}\%\\
 & 50 & 25.9\% & 15.9\% & 19.6\% & 21.2\% & {\bf 28.9}\%\\
 & 100 & 39.4\% & 23.9\% & 30.5\% & 35.8\% & {\bf 45.6}\%\\
 & 250 & 66.7\% & 46.7\% & 57.0\% & 67.8\% & {\bf 76.4}\%\\ \midrule
\multirow{4}{*}{Student's t (2)} & 20 & 56.7\% & 52.9\% & 52.9\% & 52.9\% & {\bf 60.1}\% \\
 & 50 & 88.2\% & 85.8\% & 86.3\% & 89.2\% & {\bf 92.3}\% \\
 & 100 & 98.8\% & 98.4\% & 98.6\% & 99.2\% & {\bf 99.6}\%  \\
 & 250 & {\bf 100.0}\% & {\bf 100.0}\% & {\bf 100.0}\% & {\bf 100.0} & {\bf 100.0}\% \\ \midrule
\multirow{4}{*}{Student's t (5)} & 20 & 22.9\% & 17.1\% & 18.6\% & 18.1\% & {\bf 23.9}\% \\
 & 50 & 43.0\% & 30.2\% & 35.5\% & 38.4\% & {\bf 46.8}\% \\
 & 100 & 64.6\% & 48.1\% & 56.4\% & 62.5\% & {\bf 70.5}\% \\
 & 250 & 92.0\% & 81.8\% & 88.2\% & 92.8\% & {\bf 95.4}\% \\ \midrule
\multirow{4}{*}{Student's t (10)} & 20 & 12.3\% & 8.8\% & 9.8\% & 9.2\% & {\bf 12.5}\% \\
 & 50 & 20.6\% & 12.0\% & 15.5\% & 15.7\% & {\bf 21.6}\% \\
 & 100 & 30.7\% & 16.2\% & 23.0\% & 24.9\% & {\bf 33.1}\% \\
 & 250 & 52.3\% & 28.5\% & 41.7\% & 47.8\% & {\bf 57.9}\% \\ \midrule 
\multirow{4}{*}{Student's t (20)} & 20 & 8.1\% & 6.4\% & 6.8\% & 6.5\% & {\bf 8.2}\%\\
 & 50 & 11.4\% & 7.2\% & 8.7\% & 8.4\% & {\bf 11.5}\% \\
 & 100 & 15.1\% & 8.0\% & 11.0\% & 10.7\% & {\bf 15.5}\%\\
 & 250 & 22.9\% & 10.3\% & 16.2\% & 17.2\% & {\bf 24.8}\% \\ \midrule
\multirow{4}{*}{Student's t (30)} & 20 & 6.9\% & 5.8\% & 6.1\% & 5.9\% & {\bf 7.0}\%  \\
 & 50 & {\bf 8.9}\% & 6.2\% & 7.1\% & 6.8\% & {\bf 8.9}\% \\
 & 100 & 10.9\% & 6.5\% & 8.3\% & 7.9\% & {\bf 11.1}\%  \\
 & 250 & 15.0\% & 7.4\% & 10.7\% & 10.7\% & {\bf 16.0}\% \\ \midrule
\multirow{4}{*}{Laplace/GN (1)} & 20 & 30.2\% & 27.2\% & 26.1\% & 26.4\% & {\bf 35.4}\% \\
 & 50 & 55.6\% & 54.5\% & 52.1\% & 59.3\% & {\bf 69.1}\%  \\
 & 100 & 79.9\% & 82.7\% & 79.7\% & 87.3\% & {\bf 92.0}\% \\
 & 250 & 98.7\% & 99.6\% & 99.2\% & 99.8\% & {\bf 99.9}\% \\ \midrule
\multirow{4}{*}{GN (1.5)} & 20 & 12.1\% & 9.1\% & 9.5\% & 9.1\% & {\bf 13.4}\% \\
 & 50 & 19.0\% & 13.2\% & 14.4\% & 16.1\% & {\bf 23.7}\%\\
 & 100 & 27.9\% & 19.8\% & 21.5\% & 27.3\% & {\bf 37.5}\% \\
 & 250 & 48.6\% & 40.2\% & 41.4\% & 55.8\% & {\bf 66.9}\% \\ \midrule
\end{tabular}
}
\caption{The table contains {\bf test power} for various fat-tailed alternatives at significance level $\alpha=5\%$. \textit{Distr} refers to the distribution used for the alternative hypothesis and $n$ indicates sample size. The acronyms JB, AD, SW, and N refer to Jarque--Bera, Anderson--Darling, Shapiro--Wilk, and $N$ test, respectively. Best performance is marked in bold.}
\label{T:tab3}
\end{table}

\begin{table}[htp!]
\centering
\scalebox{0.70}{
\begin{tabular}{c|c|rrrr}
\toprule
Distr & n & JB & AD & SW & N (right-sided) \\ \midrule
\multirow{4}{*}{Logistic} & 20 &2.2\% &1.0\% &0.4\% &{\bf 4.3}\%\\
 & 50   & 2.7\% &1.2\% &0.4\% &{\bf 6.9}\%\\
 & 100 &2.7\% &1.0\% &0.3\% &{\bf 9.3}\% \\
 & 250 &1.6\% &0.5\% &0.1\% &{\bf 9.7}\% \\ \midrule
\multirow{4}{*}{Student's t (20)} & 20 &1.6\% &0.9\% &0.4\% &\textbf{2.8}\% \\
 & 50    &2.1\% &1.3\% &0.5\% &{\bf 3.6}\% \\
 & 100  &2.5\% &1.4\% &0.5\% &{\bf 4.7}\% \\
 & 250  &3.0\% &1.3\% &0.5\% &{\bf 7.0}\% \\ \midrule
\multirow{4}{*}{Laplace/GN (1)} & 20 &2.1\% &1.6\% &0.2\% &{\bf 7.6}\% \\
 & 50   & 1.2\% &1.6\% &0.1\% &{\bf 9.2}\%  \\
 & 100 &0.3\% &0.7\% &0.0\% &{\bf 5.1}\%  \\
 & 250 &0.0\% &0.0\% &0.0\% &{\bf 0.2}\% \\ \midrule
\end{tabular}
}
\caption{The table contains the {\bf unique rejections ratios} for three selected fat-tailed alternative distribution. For each test, a proportion of simulated samples on which normality hypothesis was uniquely rejected by the test (among all four considered tests) is computed. The remaining notation is aligned with Table~\ref{T:tab3}. Best performance is marked in bold.
}
\label{T:tab3b}
\end{table}

\begin{table}[htp!]
\centering
\scalebox{0.7}{
\begin{tabular}{c|c|rrrrr}
\toprule
Distr & n & JB & AD & SW & N (two-sided) & N (left-sided)\\ \midrule
\multirow{4}{*}{GN (2.5)} & 20 & 2.4\% & 4.5\% & 4.1\% & 5.4\% & {\bf 8.6}\% \\
 & 50 & 1.3\% & 5.5\% & 4.6\% & 8.0\% & {\bf 13.6}\%\\
 & 100 & 0.9\% & 7.4\% & 6.2\% & 13.0\% & {\bf 21.1}\%\\
 & 250 & 4.3\% & 14.5\% & 12.9\% & 29.1\% & {\bf 41.3}\%\\ \midrule
\multirow{4}{*}{GN (3)} & 20 & 1.3\% & 4.9\% & 4.4\% & 7.3\% & {\bf 12.4}\%\\
 & 50 & 0.4\% & 8.1\% & 7.0\% & 15.6\% & {\bf 24.9}\%\\
 & 100 & 0.8\% & 15.0\% & 13.7\% & 31.2\% & {\bf 44.1}\%\\
 & 250 & 22.8\% & 40.9\% & 41.7\% & 70.6\% & {\bf 81.2}\% \\  \midrule
\multirow{4}{*}{GN (5)} & 20 & 0.5\% & 8.2\% & 7.8\% & 15.9\% & {\bf 25.4}\%\\
 & 50 & 0.1\% & 23.0\% & 24.3\% & 47.2\% & {\bf 61.2}\% \\
 & 100 & 12.3\% & 53.4\% & 60.8\% & 83.3\% & {\bf 90.8}\%\\
 & 250 & 97.0\% & 96.7\% & 99.1\% & 99.9\% & {\bf 100.0}\%\\  \midrule 
\multirow{4}{*}{GN (10)} & 20 & 0.3\% & 13.1\% & 13.8\% & 26.4\% & {\bf 38.5}\% \\
 & 50 & 0.3\% & 43.4\% & 53.4\% & 73.0\% & {\bf 83.3}\% \\
 & 100 & 50.0\% & 84.9\% & 94.6\% & 97.7\% & {\bf 99.1}\% \\
 & 250 & {\bf 100.0}\% & {\bf 100.0}\% & {\bf 100.0}\% & {\bf 100.0}\% & {\bf 100.0}\%\\ \midrule
\end{tabular}
}\caption{The table contains {\bf test power} for various slim-tailed alternatives at significance level $\alpha=5\%$. The remaining notation is aligned with Table~\ref{T:tab3}. Best performance is marked in bold.
}
\label{T:tab3c}
\end{table}

\section{Mathematical framework and asymptotic results}\label{S:asymptotic}

In this section, we provide the explicit formulas for the conditional variance estimators, study their asymptotic behaviour, and show that $N$ is asymptotically normal. 

First, we introduce the basic notation and provide more explicit formulas for sets $L$, $M$, and $R$ that were given in Section~\ref{S:206020_Rule}; see \eqref{eq:LRM}.

We assume that $X\sim \mathcal{N}(\mu,\sigma)$ for mean parameter $\mu$ and standard deviation parameter $\sigma$. We use $F_X$ to denote the distribution of $X$, $\Phi$ to denote the standard normal distribution, and $\phi$ to denote the standard normal density. Following the usual convention, for any $n\in\bN$, we use $(X_1, \ldots, X_n)$ to denote the random sample from $X$ and for $i=1,\ldots, n$, we use $X_{(i)}$ to denote the sample $i$th order statistic.

For fixed partition parameters $\alpha,\beta\in \bR$, where $0\leq \alpha<\beta\leq 1$, we define the conditioning set
\begin{equation*}
A[\alpha,\beta]:=\{x\in \mathbb{R}:F_X^{-1}(\alpha)< x \leq F_X^{-1}(\beta)\}.
\end{equation*}
For brevity and with slight abuse of notation, we often write $A$ instead of  $A[\alpha,\beta]$. Then, the explicit formulas for sets $L$, $M$, and $R$ given in \eqref{eq:LRM} are
\begin{equation}\label{eq:LMR.true}
L := A[0,\tilde{q}],\quad M:= A[\tilde{q},1-\tilde{q}], \quad R:= A[1-\tilde{q},1], 
\end{equation}
where $\tilde{q}:=\Phi(x),$ and $x$ is the unique negative solution of the equation
\begin{equation}\label{eq:eq_for_tilde_q}
-x\Phi(x)-\phi(x)(1-2\Phi(x))=0.
\end{equation}
The approximate value of $\tilde{q}$ is $0.19809$; we refer to \cite[Lemma 3.3]{JawPit2015} for details. Note that \eqref{eq:eq_for_tilde_q} could be seen as a specific form of differential equation $-xy-y'(1-2y)=0$, where $y(x):=\Phi(x)$; this could be used to determine similar ratios for other distributions.

Next, we give the exact definition of the conditional sample variance. For a fixed set $A$, where $A=A[\alpha,\beta]$, the conditional variance estimator on the set $A$ is given by
\begin{equation}\label{eq:cond_var_est}
\hat{\sigma}^2_{A}:=\frac{1}{[n\beta]-[n\alpha]}\sum_{i=[n\alpha]+1}^{[n\beta]} \left(X_{(i)}-\overline{X}_{A}\right)^2,
\end{equation}
where $
[x]:=\max\{k\in \mathbb{Z}:k\leq x\}
$
denotes the floor of $x\in\bR$ and
\begin{equation}\label{eq:cond_mean_est}
\overline{X}_{A}:=\frac{1}{[n\beta]-[n\alpha]}\sum_{i=[n\alpha]+1}^{[n\beta]}X_{(i)}
\end{equation}
is the conditional sample mean. In particular, we set
$
\hat{\sigma}^2:=\hat{\sigma}^2_{A[0,1]}.
$
Recall that the test statistic $N$ is given by
\begin{equation}\label{eq:N2}
N= \frac{1}{\rho}\left(\frac{\hat\sigma^2_L-\hat\sigma^2_M}{\hat\sigma^2} +\frac{\hat\sigma^2_R-\hat\sigma^2_M}{\hat\sigma^2}\right)\sqrt{n}\,,
\end{equation}
where the normalising constant $\rho$ in \eqref{eq:N2} is approximately equal to $1.7885$; we refer to Appendix~\ref{A:rho} for the closed form formula for $\rho$. Now, we are ready to state the main result of this section, i.e. Theorem~\ref{th:asymptotic_distr_test_stat}.

\begin{theorem}\label{th:asymptotic_distr_test_stat}
Let $X\sim N(\mu,\sigma)$. Then,
\begin{equation*}
N\xrightarrow{\,d\,}\mathcal{N}(0,1)\,,\qquad n\to\infty,
\end{equation*}
where $N$ is given in \eqref{eq:N2}, and $\rho$ is a fixed normalising constant independent of $\mu$, $\sigma$, and $n$.
\end{theorem}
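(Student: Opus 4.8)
The plan is to exploit the location--scale structure of the problem together with the asymptotic theory of trimmed sample moments. First I would observe that $N$ is invariant under affine transformations of the data: replacing each $X_i$ by $aX_i+b$ sends every conditional variance estimator $\hat\sigma^2_A$ in \eqref{eq:cond_var_est} to $a^2\hat\sigma^2_A$, and likewise $\hat\sigma^2\mapsto a^2\hat\sigma^2$, so the ratios in \eqref{eq:N2} — and hence $N$ — are unchanged. This confirms that $N$ is pivotal and lets me assume without loss of generality that $X\sim\mathcal N(0,1)$. Since $\hat\sigma^2\xrightarrow{\,p\,}1$ by the law of large numbers, Slutsky's theorem reduces the claim to showing
\[
\frac{\sqrt n}{\rho}\left(\hat\sigma^2_L+\hat\sigma^2_R-2\hat\sigma^2_M\right)\xrightarrow{\,d\,}\mathcal N(0,1).
\]

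Next I would rewrite each estimator in \eqref{eq:cond_var_est} as a smooth function of the normalised trimmed sums $\frac1n\sum_{i=[n\alpha]+1}^{[n\beta]}X_{(i)}$ and $\frac1n\sum_{i=[n\alpha]+1}^{[n\beta]}X_{(i)}^2$, where $(\alpha,\beta)$ ranges over $(0,\tilde q)$, $(\tilde q,1-\tilde q)$, and $(1-\tilde q,1)$, using $\big([n\beta]-[n\alpha]\big)/n\to\beta-\alpha$ to handle the floor-function normalisation. The core analytic input is a joint central limit theorem for this six-dimensional vector of trimmed sums. I would establish it through the asymptotic theory of L-statistics (trimmed means and their quadratic analogues): each trimmed sum admits a Bahadur-type representation as an average of i.i.d.\ contributions plus an asymptotically negligible remainder, where the i.i.d.\ summands carry two pieces — the truncated integrand itself and a boundary correction stemming from the fluctuation of the empirical quantiles at the cut points $\Phi^{-1}(\alpha)$ and $\Phi^{-1}(\beta)$. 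Centering is at the population trimmed moments $\int_\alpha^\beta\Phi^{-1}(u)\,du$ and $\int_\alpha^\beta(\Phi^{-1}(u))^2\,du$, and the multivariate Lindeberg--L\'evy theorem then yields joint normality with an explicit covariance matrix.

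With the joint CLT in hand, I would apply the multivariate delta method to the map sending the vector of trimmed sums to $\hat\sigma^2_L+\hat\sigma^2_R-2\hat\sigma^2_M$. The decisive simplification is that under normality the $20/60/20$ identity \eqref{rq:LMR} gives $\sigma^2_L+\sigma^2_R-2\sigma^2_M=0$ \emph{exactly}, so the deterministic (centering) part of this linear combination vanishes and the statistic is genuinely of order $n^{-1/2}$. Consequently $\sqrt n(\hat\sigma^2_L+\hat\sigma^2_R-2\hat\sigma^2_M)$ converges to a centered normal law whose variance is a fixed number $\tau^2$ depending only on $\tilde q$ and the standard normal law. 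Finally, by construction the normalising constant $\rho$, whose closed form is recorded in Appendix~\ref{A:rho}, is precisely this limiting standard deviation, $\rho=\tau$, so dividing by $\rho$ rescales the limit to $\mathcal N(0,1)$, which is the assertion of Theorem~\ref{th:asymptotic_distr_test_stat}.

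The main obstacle will be the joint CLT in the second step, specifically the correct accounting of the boundary terms generated by the random trimming locations. Because the sums in \eqref{eq:cond_var_est} are truncated at \emph{empirical} quantiles rather than fixed thresholds, the order statistics near $\Phi^{-1}(\alpha)$ and $\Phi^{-1}(\beta)$ contribute non-trivially to the asymptotic variance, and this must be made rigorous either through a Bahadur--Kiefer representation of the sample quantiles or, equivalently, through Hadamard differentiability of the trimmed-moment functional combined with the functional delta method. The remaining difficulty is bookkeeping: assembling the full covariance matrix, propagating it through the delta method, and verifying that the resulting variance matches the closed-form expression for $\rho^2$. This is computationally heavy but conceptually routine once the representation is in place.
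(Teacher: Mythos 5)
Your plan follows essentially the same route as the paper: a Stigler-type i.i.d.-plus-remainder representation of the trimmed second moments (with the boundary corrections generated by the random trimming at the empirical quantiles, which you correctly identify as the main technical point), a multivariate CLT, Slutsky's theorem, and the exact identity $\sigma^2_L+\sigma^2_R-2\sigma^2_M=0$ to kill the centering so that $\rho$ is the limiting standard deviation. The only differences are organisational: you dispose of the $(\mu,\sigma)$-independence of $\rho$ up front via affine invariance of $N$ rather than at the end via an explicit scaling of the influence functions, and you phrase the reduction as a delta method on six trimmed sums where the paper first shows (Lemma~\ref{lm:equiv_asympt_distr}) that the conditional sample mean may be replaced by the true conditional mean and then applies a three-dimensional CLT directly.
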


Before we present the proof of Theorem~\ref{th:asymptotic_distr_test_stat} let us introduce a series of Lemmas and additional notation; proof techniques are partially based on those introduced in \cite{Stigler1973}. To ease the notation, for a fixed set $A$, where $A=A[\alpha,\beta]$, we define
\begin{align*}
\mu_A & :=\bE[X|X\in A], & a & :=F_X^{-1}(\alpha)=\mu+\sigma\Phi^{-1}(\alpha),\\
\sigma^2_A &:=\bE[(X-\mu_A)^2|X\in A],& b & :=F_X^{-1}(\beta)=\mu+\sigma\Phi^{-1}(\beta),\\
\kappa_{A}&:=\tfrac{1}{(\sigma^2_A)^2}\bE[(X-\mu_A)^4|X\in A],& m_n & := [n\beta]-[n\alpha].
\end{align*}
Additionally, we set
\[
A_n:=\#\{i:X_i\leq a\}=\sum_{i=1}^n \1_{\{X_i\leq a \}},\quad
B_n:=\#\{i:X_i\leq b\}=\sum_{i=1}^n \1_{\{X_i\leq b \}},
\]
where $\1_{C}$ is the indicator function of set $C$. It is useful to note that $A_n$ and $B_n$ follow the binomial distributions $B(n,\alpha)$ and $B(n,\beta)$, respectively; note that for $\alpha=0$ and $\beta=1$ the distributions are degenerate with $A_n\equiv 0$ and $B_n \equiv n$.

Finally, for any sequence $(a_i)$ we introduce the notation of the {\it directed sum} that is given by
\begin{equation*}
\wsum_{i=k}^l a_i:=
\begin{cases}
\sum_{i=k+1}^l a_i, &\text{if }k<l,\\
0, &\text{if k=l,}\\
-\sum_{i=l+1}^k a_i, &\text{if }k>l.\\
\end{cases}
\end{equation*}

In Lemma~\ref{lm:consistency_cond_mean}, we show the consistency of the conditional sample expectation. Note that the statement of Lemma~\ref{lm:consistency_cond_mean} does not explicitly rely on normality assumption. In fact, the proof is true under very weak conditions imposed on $X$ (e.g. continuity of the distribution function of $X$); similar statement is true for other lemmas presented in this section. Also, it should be noted that Lemma~\ref{lm:consistency_cond_mean} and Lemma~\ref{lm:asymptotic_distribution_trimmed_mean} show consistency and asymptotic distribution of the standard non-parametric Expected Shortfall estimator; see e.g. \cite{McnFreEmb2010} for details.

\begin{lemma}\label{lm:consistency_cond_mean}
For any $A=A[\alpha,\beta]$, it follows that
$
\overline{X}_{A}\xrightarrow{\,\mathbb{P}\,}\mu_A,\quad n\to \infty.
$
\end{lemma}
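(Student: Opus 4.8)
The plan is to relate the order-statistic sum defining $\overline{X}_A$ to the ordinary sum of the sample points that actually fall in the population set $A=(a,b]$, and then to control the discrepancy at the two endpoints. The starting point is the elementary identity
\[
\sum_{i=A_n+1}^{B_n}X_{(i)}=\sum_{i=1}^n X_i\1_{\{a<X_i\le b\}},
\]
which holds because the $A_n$ smallest order statistics are exactly the observations not exceeding $a$, and the $B_n$ smallest are exactly those not exceeding $b$. Using the directed sum $\wsum$, I would then decompose
\[
\sum_{i=[n\alpha]+1}^{[n\beta]}X_{(i)}=\sum_{i=A_n+1}^{B_n}X_{(i)}+\wsum_{i=B_n}^{[n\beta]}X_{(i)}-\wsum_{i=A_n}^{[n\alpha]}X_{(i)},
\]
so that after dividing by $m_n$ the estimator splits into a main term and two boundary corrections.

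For the main term, I would apply the strong law of large numbers (valid since $\bE|X|<\infty$) to obtain $\tfrac1n\sum_{i=1}^n X_i\1_{\{a<X_i\le b\}}\to \bE[X\1_{\{a<X\le b\}}]=(\beta-\alpha)\mu_A$ almost surely, together with $m_n/n=([n\beta]-[n\alpha])/n\to\beta-\alpha$. Dividing one limit by the other gives $\tfrac1{m_n}\sum_{i=A_n+1}^{B_n}X_{(i)}\to\mu_A$. Hence everything reduces to showing that the two boundary corrections, normalised by $m_n$, converge to zero in probability.

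The main obstacle is precisely this boundary control. Consider the upper term $\tfrac1{m_n}\wsum_{i=B_n}^{[n\beta]}X_{(i)}$. The number of summands is $|B_n-[n\beta]|$, and every order statistic whose index lies between $B_n$ and $[n\beta]$ is sandwiched between $b$ and the empirical quantile $X_{([n\beta])}$ (or $X_{([n\beta]+1)}$); in particular its absolute value is at most $\max(|b|,|X_{([n\beta])}|,|X_{([n\beta]+1)}|)$. Since $F_X$ is continuous, the sample quantiles $X_{([n\beta])}$ and $X_{([n\beta]+1)}$ are consistent for $b$, so with probability tending to one each relevant order statistic is bounded by $|b|+1$. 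Combining this with $|B_n-[n\beta]|\le|B_n-n\beta|+1$ and $\Var(B_n)=n\beta(1-\beta)$, Chebyshev's inequality gives $|B_n-[n\beta]|=O_{\mathbb{P}}(\sqrt n)$, whence
\[
\left|\frac{1}{m_n}\wsum_{i=B_n}^{[n\beta]}X_{(i)}\right|\le \frac{|B_n-[n\beta]|}{m_n}\bigl(|b|+1\bigr)=O_{\mathbb{P}}\!\left(\tfrac1{\sqrt n}\right)\xrightarrow{\,\mathbb{P}\,}0.
\]
The lower boundary term is handled identically with $A_n$, $[n\alpha]$ and $a$ in place of $B_n$, $[n\beta]$ and $b$; the degenerate cases $\alpha=0$ and $\beta=1$ force the corresponding correction to vanish identically, since then $A_n\equiv[n\alpha]=0$ or $B_n\equiv[n\beta]=n$. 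Assembling the main term with the two vanishing corrections yields $\overline{X}_A\xrightarrow{\,\mathbb{P}\,}\mu_A$.
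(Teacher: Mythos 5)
Your proposal is correct and follows essentially the same route as the paper: the identical decomposition of $\sum_{i=[n\alpha]+1}^{[n\beta]}X_{(i)}$ into the sum over $\{i:X_i\in A\}$ plus two directed boundary sums, the law of large numbers for the main term, and a count-times-magnitude bound for the boundary terms using consistency of the empirical quantiles. The only cosmetic difference is that you control $|B_n-[n\beta]|$ via Chebyshev to get an $O_{\mathbb{P}}(\sqrt{n})$ rate, whereas the paper only needs (and only invokes) the weak law to get $(A_n-[n\alpha])/m_n\xrightarrow{\,\mathbb{P}\,}0$; both suffice.
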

\begin{proof}
Let $A=A[\alpha,\beta]$. For any $n\in\bN$, we get
\begin{align*}
\overline{X}_{A} & = \frac{1}{m_n}\sum_{i=A_n+1}^{B_n}X_{(i)} +\tfrac{1}{m_n}\wsum_{i=[n\alpha]}^{A_n}X_{(i)}+\tfrac{1}{m_n}\wsum_{i=B_n}^{[n\beta]}X_{(i)}.
\end{align*}
Now, we show that
\begin{equation}\label{eq:wsum1}
\tfrac{1}{m_n}\wsum_{i=[n\alpha]}^{A_n}X_{(i)}\xrightarrow{\,\mathbb{P}\,}0.
\end{equation}
Due to the consistency of the empirical quantiles, we get $X_{([n\alpha])}\xrightarrow{\mathbb{P}}a$ and $X_{(A_n)}\xrightarrow{\mathbb{P}}a$, as $n\to \infty$.
Thus, using inequality
\begin{align*}
0 & \leq  \abs{\tfrac{1}{m_n}\wsum_{i=[n\alpha]}^{A_n}X_{(i)}} \leq \abs{\frac{A_n-[n\alpha]}{m_n}} \max\{\abs{X_{([n\alpha])}},\abs{X_{(A_n)}}\},
\end{align*}
to prove \eqref{eq:wsum1}, it is sufficient to show that
$
\left| \frac{A_n-[n\alpha]}{m_n}\right|\xrightarrow{\,\mathbb{P}\,}0.
$
Noting that
\[
\frac{A_n-[n\alpha]}{m_n}=\frac{n}{m_n}\left(\frac{1}{n}A_n-\alpha\right)+\frac{n\alpha-[n\alpha]}{m_n},
\]
where
\begin{equation}\label{eq:convergence_nalpha}
\lim_{n\to\infty}\frac{n\alpha-[n\alpha]}{m_n} =0,\quad \lim_{n\to\infty}\frac{n}{m_n} =\frac{1}{\beta-\alpha},
\end{equation}
and, by the Law of Large Numbers,
$
\left(\tfrac{1}{n}A_n-\alpha\right)\xrightarrow{\,\mathbb{P}\,}0,
$
we conclude the proof of \eqref{eq:wsum1}. The proof of
\begin{equation}\label{eq:wsum2}
\tfrac{1}{m_n}\wsum_{i=B_n}^{[n\beta]}X_{(i)}\xrightarrow{\,\mathbb{P}\,}0
\end{equation}
is similar to the proof of \eqref{eq:wsum1} and is omitted for brevity. 

Next, observe that
\[
\frac{1}{m_n}\sum_{i=A_n+1}^{B_n}X_{(i)}=\frac{n}{m_n}\left(\frac{1}{n}\sum_{i=1}^{n}X_{i}\1_{\{X_i\in A\}}\right).
\]
Consequently, noting that
$
\mu_A=\frac{\bE[X\1_{\{X\in A\}}]}{\beta-\alpha},
$
and using the Law of Large Numbers, we get
\begin{equation}\label{eq:wsum3}
\frac{1}{m_n}\sum_{i=A_n+1}^{B_n}X_{(i)}\xrightarrow{\,\mathbb{P}\,}\mu_A.
\end{equation}
Combining \eqref{eq:wsum1}, \eqref{eq:wsum2}, and \eqref{eq:wsum3}, we conclude the proof. \qed
\end{proof}

Next, we focus on the asymptotic distribution of the conditional sample mean; note that Lemma~\ref{lm:asymptotic_distribution_trimmed_mean} is a slight modification of the result of \cite{Stigler1973} for trimmed means. For completeness, we present the full proof.

\begin{lemma}\label{lm:asymptotic_distribution_trimmed_mean}
For any $A=A[\alpha,\beta]$, it follows that
\begin{equation*}
\sqrt{n}\left(\overline{X}_{A}-\mu_A\right)\xrightarrow{\,d\,}\mathcal{N}(0,\eta_A),\quad n\to\infty,
\end{equation*}
for some constant $0<\eta_A<\infty$. 
\end{lemma}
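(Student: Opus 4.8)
The plan is to build on the decomposition already obtained in the proof of Lemma~\ref{lm:consistency_cond_mean},
\[
\overline{X}_{A} = \frac{1}{m_n}\sum_{i=A_n+1}^{B_n}X_{(i)} + \frac{1}{m_n}\wsum_{i=[n\alpha]}^{A_n}X_{(i)} + \frac{1}{m_n}\wsum_{i=B_n}^{[n\beta]}X_{(i)},
\]
and, after multiplication by $\sqrt{n}$, to reduce it to a single normalised sum of i.i.d.\ random vectors. Concretely, I would introduce the mean-zero i.i.d.\ vector
\[
\xi_i := \bigl(X_i\1_{\{X_i\in A\}}-(\beta-\alpha)\mu_A,\ \1_{\{X_i\leq a\}}-\alpha,\ \1_{\{X_i\leq b\}}-\beta\bigr)
\]
and set $(T_n,U_n,V_n):=\tfrac{1}{\sqrt n}\sum_{i=1}^n\xi_i$, so that $U_n=(A_n-n\alpha)/\sqrt n$ and $V_n=(B_n-n\beta)/\sqrt n$. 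Since $X$ is Gaussian, $\xi_1$ has finite second moments and the multivariate CLT yields $(T_n,U_n,V_n)\xrightarrow{\,d\,}\mathcal{N}(0,\Sigma)$ with $\Sigma:=\Cov(\xi_1)$. The target of the argument is then the asymptotically linear representation
\[
\sqrt{n}\bigl(\overline{X}_{A}-\mu_A\bigr)=\frac{1}{\beta-\alpha}\bigl(T_n+a\,U_n-b\,V_n\bigr)+o_{\mathbb{P}}(1),
\]
after which Slutsky's theorem gives $\sqrt{n}(\overline{X}_A-\mu_A)\xrightarrow{\,d\,}\mathcal{N}(0,\eta_A)$ with $\eta_A=(\beta-\alpha)^{-2}(1,a,-b)\Sigma(1,a,-b)^{\top}$.

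Establishing the representation splits into the middle term and the two boundary corrections. For the middle term I would use the identity $\sum_{i=A_n+1}^{B_n}X_{(i)}=\sum_{i=1}^n X_i\1_{\{X_i\in A\}}$ and write $\tfrac{n}{m_n}\tfrac1n\sum_i X_i\1_{\{X_i\in A\}}-\mu_A$ as $\tfrac{n}{m_n}\cdot\tfrac{1}{\sqrt n}T_n$ plus the deterministic discrepancy $\bigl(\tfrac{n(\beta-\alpha)}{m_n}-1\bigr)\mu_A$; the latter, after scaling by $\sqrt n$, vanishes because its numerator $n(\beta-\alpha)-m_n$ is bounded while $m_n\sim n(\beta-\alpha)$, leaving the leading contribution $\tfrac{1}{\beta-\alpha}T_n$. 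For the $a$-boundary I would replace each order statistic by the true quantile,
\[
\wsum_{i=[n\alpha]}^{A_n}X_{(i)}=(A_n-[n\alpha])\,a+\wsum_{i=[n\alpha]}^{A_n}\bigl(X_{(i)}-a\bigr),
\]
so that the first piece contributes $\tfrac{a}{\beta-\alpha}U_n+o_{\mathbb{P}}(1)$ (using $\tfrac{A_n-[n\alpha]}{\sqrt n}=U_n+o_{\mathbb{P}}(1)$ and $\tfrac{n}{m_n}\to\tfrac{1}{\beta-\alpha}$); the $b$-boundary is handled identically and produces $-\tfrac{b}{\beta-\alpha}V_n$, the sign arising from the directed sum running from $B_n$ to $[n\beta]$.

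The remainder terms are where the actual work lies. By monotonicity of the order statistics,
\[
\abs{\wsum_{i=[n\alpha]}^{A_n}\bigl(X_{(i)}-a\bigr)}\leq \abs{A_n-[n\alpha]}\cdot\max\bigl\{\abs{X_{(i)}-a}:\ i\text{ between }[n\alpha]\text{ and }A_n\bigr\}.
\]
Here $\abs{A_n-[n\alpha]}=O_{\mathbb{P}}(\sqrt n)$, and by the asymptotic normality of central order statistics the maximal deviation over the window is $O_{\mathbb{P}}(1/\sqrt n)$, so the remainder is $O_{\mathbb{P}}(1)$ and $\tfrac{\sqrt n}{m_n}$ times it is $o_{\mathbb{P}}(1)$. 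The hard part will be exactly this uniform control of $X_{(i)}$ over the \emph{random} index window of width $O_{\mathbb{P}}(\sqrt n)$ around the quantile index: the boundary contributions are individually of the same $\sqrt n$ order as the leading term, and one must show they collapse onto the linear statistics $aU_n$ and $bV_n$ with negligible error, which requires a quantitative bound (a maximal inequality or quantile-process estimate, as in \cite{Stigler1973}) on how far $X_{([n\alpha]\pm C\sqrt n)}$ can stray from $a$. Finally, $0<\eta_A<\infty$ follows since all moments of $X$ are finite, while the random part $X_1\1_{\{X_1\in A\}}+a\1_{\{X_1\leq a\}}-b\1_{\{X_1\leq b\}}$ equals $X_1-b$ on the positive-probability interval $A$ and is therefore non-degenerate, giving strict positivity of the variance.
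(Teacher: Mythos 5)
Your proposal is correct and follows essentially the same route as the paper: the same boundary decomposition, the same asymptotically linear representation (your $\tfrac{1}{\beta-\alpha}\left(T_n+a\,U_n-b\,V_n\right)$ is exactly the paper's $\tfrac{1}{\sqrt{n}\,(\beta-\alpha)}\sum_{i=1}^n Z^A_i$ after expanding $Z^A_i$), and the same CLT-plus-Slutsky conclusion. The only point worth noting is that the ``hard part'' you flag does not actually require a maximal inequality or quantile-process estimate: by monotonicity of the order statistics the supremum of $\abs{X_{(i)}-a}$ over the random index window is attained at the two endpoints $X_{([n\alpha])}$ and $X_{(A_n)}$, and since the prefactor $\tfrac{A_n-[n\alpha]}{m_n/\sqrt{n}}$ converges in distribution (hence is $O_{\mathbb{P}}(1)$), mere consistency of these two order statistics already forces the remainder, after multiplication by $\tfrac{\sqrt{n}}{m_n}$, to be $o_{\mathbb{P}}(1)$.
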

\begin{proof} For brevity, we assume that $\alpha>0$ and $\beta<1.$ The remaining degenerate cases could be treated in the similar manner. Let $A=A[\alpha,\beta]$. Define
\[
S_n:=\sqrt{n}\left(\overline{X}_{A}-\mu_A\right).
\]
As in the proof of Lemma \ref{lm:consistency_cond_mean}, observe that
\begin{align}
S_n = & \frac{\sqrt{n}}{m_n}\Bigg(\sum_{i=A_n+1}^{B_n}X_{(i)}-m_n\mu_A+\wsum_{i=[n\alpha]}^{A_n}X_{(i)}+\wsum_{i=B_n}^{[n\beta]}X_{(i)}\Bigg)\nonumber\\
= & \frac{\sqrt{n}}{m_n}\Bigg(\sum_{i=A_n+1}^{B_n}\left(X_{(i)}-\mu_A\right)+(A_n-[n\alpha])(a-\mu_A)+([n\beta]-B_n)(b-\mu_A)\nonumber\\
& \quad {}+\wsum_{i=[n\alpha]}^{A_n}(X_{(i)}-a)+\wsum_{i=B_n}^{[n\beta]}(X_{(i)}-b) \Bigg).\label{eq:Sn.1}
\end{align}
Now, we show that
\begin{equation}\label{eq:2wsum1}
\tfrac{\sqrt{n}}{m_n}\wsum_{i=[n\alpha]}^{A_n}(X_{(i)}-a)\xrightarrow{\mathbb{P}}0.
\end{equation}
Due to the consistency of the empirical quantiles, we get
\[
\left(X_{([n\alpha])}-a\right)\xrightarrow{\mathbb{P}}0\quad \textrm{and}\quad  \left(X_{(A_n)}-a\right)\xrightarrow{\mathbb{P}}0,
\]
as $n\to \infty$.
Thus, using inequality
\begin{align*}
0& \leq \abs{\tfrac{\sqrt{n}}{m_n}\wsum_{i=[n\alpha]}^{A_n}(X_{(i)}-a)} \leq \abs{\frac{A_n-[n\alpha]}{m_n/\sqrt{n}}}\max\{\abs{(X_{([n\alpha])}-a},\abs{X_{(A_n)}-a}\},
\end{align*}
it is sufficient to show that $\frac{A_n-[n\alpha]}{m_n/\sqrt{n}}$ converges in distribution to some non-degenerate distribution. Note that
\[
\frac{A_n-[n\alpha]}{m_n / \sqrt{n}} =\frac{\sqrt{n}\left(\frac{1}{n}A_n-\alpha\right)}{m_n/ n}+\frac{n\alpha-[n\alpha]}{m_n / \sqrt{n}},
\]
where 
\begin{equation}\label{eq:convergence_nalpha_sqrtn}
\lim_{n\to\infty}\frac{m_n}{n} =\beta-\alpha,\quad \lim_{n\to\infty}\frac{n\alpha-[n\alpha]}{m_n / \sqrt{n}}=0,
\end{equation}
and, by the Central Limit Theorem applied to $A_n\sim B(n,\alpha)$, we get
\[
\sqrt{n}\left(\tfrac{1}{n}A_n-\alpha\right)\xrightarrow{\,d\,}\mathcal{N}\left(0,\sqrt{\alpha(1-\alpha)}\right).
\]
Thus, using the Slutsky's Theorem (see e.g. \cite[Theorem 6']{Ferguson1996}), we get
\[
\frac{A_n-[n\alpha]}{m_n / \sqrt{n}}\xrightarrow{\,d\,} \mathcal{N}\left(0,\sqrt{\frac{\alpha(1-\alpha)}{\beta-\alpha}}\right),
\]
which concludes the proof of \eqref{eq:2wsum1}. Similarly, one can show that
\begin{equation}\label{eq:2wsum2}
\tfrac{\sqrt{n}}{m_n}\wsum_{i=B_n}^{[n\beta]}(X_{(i)}-b)\xrightarrow{\mathbb{P}}0.
\end{equation}
Combining \eqref{eq:2wsum1} with \eqref{eq:2wsum2}, and noting that
 \begin{equation}\label{eq:floor.nofloor}
\frac{n\alpha-[n\alpha]}{m_n / \sqrt{n}}\to 0\quad\textrm{and}\quad \frac{[n\beta]-n\beta}{m_n / \sqrt{n}}\to 0,
 \end{equation}
we can rewrite \eqref{eq:Sn.1} as
\[
S_n =  \frac{\sqrt{n}}{m_n}\bigg(\sum_{i=A_n+1}^{B_n}(X_{(i)}-\mu_A)+(A_n-n\alpha)(a-\mu_A)+ (n\beta-B_n)(b-\mu_A) \bigg)+r_n,
\]
where $r_n\xrightarrow{\mathbb{P}}0$. Next, we have
\[
S_n=\frac{n(\beta-\alpha)}{m_n}\left(\frac{\sqrt{n}}{n(\beta-\alpha)} \sum_{i=1}^n Z^A_i\right)+r_n,
\]
where for $i=1,\ldots,n$, we set
\begin{align*}
Z^A_i & :=(X_i-\mu_A)\1_{\{X_i\in A\}}+(\1_{\{X_i\leq a\}}-\alpha)(a-\mu_A)+(\beta-\1_{\{X_i\leq b\}})(b-\mu_A).
\end{align*}
Finally, noting that for $n\to\infty$ we get
$
\frac{n(\beta-\alpha)}{m_n}\xrightarrow{\,\mathbb{P}\,}1,
$
and combining the Central Limit Theorem applied to $(Z^A_i)$ with the Slutsky's Theorem we conclude the proof; note that $(Z^A_i)$ are i.i.d. with zero mean and finite variance. \qed
\end{proof}

Next, we show that for the conditional variance estimator one can substitute the sample mean with the true mean without impacting the asymptotics. For any $A$, where $A=A[\alpha,\beta]$, the conditional variance estimator with known mean is given by
\begin{equation*}
\hat{s}^2_{A}:=\frac{1}{m_n}\sum_{i=[n\alpha]+1}^{[n\beta]} \left(X_{(i)}-\mu_A\right)^2.
\end{equation*}

\begin{lemma}\label{lm:equiv_asympt_distr}
For any $A=A[\alpha,\beta]$, it follows that
$\sqrt{n}\left(\hat{\sigma}^2_{A}-\hat{s}^2_{A}\right)\xrightarrow{\,\mathbb{P}\,}0$, $n\to\infty$.
\end{lemma}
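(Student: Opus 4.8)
The plan is to reduce the statement to the central limit theorem for the conditional sample mean already established in Lemma~\ref{lm:asymptotic_distribution_trimmed_mean}, by exploiting the elementary bias--variance decomposition. First I would observe that $\hat{\sigma}^2_A$ and $\hat{s}^2_A$ average the very same order statistics $X_{([n\alpha]+1)},\dots,X_{([n\beta])}$ over the same index set of size $m_n$, and that $\overline{X}_A$ is exactly the arithmetic mean of these order statistics. Hence the standard identity
\begin{equation*}
\frac{1}{m_n}\sum_{i=[n\alpha]+1}^{[n\beta]}\left(X_{(i)}-\mu_A\right)^2=\frac{1}{m_n}\sum_{i=[n\alpha]+1}^{[n\beta]}\left(X_{(i)}-\overline{X}_A\right)^2+\left(\overline{X}_A-\mu_A\right)^2
\end{equation*}
applies with no remainder term, giving $\hat{s}^2_A-\hat{\sigma}^2_A=(\overline{X}_A-\mu_A)^2$, and therefore
\begin{equation*}
\sqrt{n}\left(\hat{\sigma}^2_A-\hat{s}^2_A\right)=-\sqrt{n}\left(\overline{X}_A-\mu_A\right)^2.
\end{equation*}

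Next I would rewrite the right-hand side as $-\tfrac{1}{\sqrt{n}}\bigl(\sqrt{n}(\overline{X}_A-\mu_A)\bigr)^2$ and invoke Lemma~\ref{lm:asymptotic_distribution_trimmed_mean}, which gives $\sqrt{n}(\overline{X}_A-\mu_A)\xrightarrow{\,d\,}\mathcal{N}(0,\eta_A)$ with $0<\eta_A<\infty$. In particular $\sqrt{n}(\overline{X}_A-\mu_A)$ is bounded in probability, hence so is its square, while the deterministic prefactor $1/\sqrt{n}\to 0$. An application of Slutsky's theorem (or simply the fact that an $O_P(1)$ quantity multiplied by a vanishing deterministic sequence converges to $0$ in probability) then yields $\sqrt{n}(\hat{\sigma}^2_A-\hat{s}^2_A)\xrightarrow{\,\mathbb{P}\,}0$, completing the argument.

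The only point requiring care — and the step I would treat as the crux — is verifying that the bias--variance identity holds \emph{exactly} rather than up to a remainder. This hinges on the fact that the summation ranges appearing in $\hat{\sigma}^2_A$, in $\hat{s}^2_A$, and in the definition of $\overline{X}_A$ coincide, so that $\overline{X}_A$ is genuinely the centring mean of the same $m_n$ terms. Once this alignment is noted, no delicate control of the fluctuating endpoints of the sum (of the kind carried out via the directed-sum estimates in Lemmas~\ref{lm:consistency_cond_mean} and~\ref{lm:asymptotic_distribution_trimmed_mean}) is needed here, since all such randomness has already been absorbed into the previously proved asymptotic normality of $\overline{X}_A$.
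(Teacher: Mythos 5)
Your proposal is correct and follows essentially the same route as the paper: the exact bias--variance identity (with the cross term vanishing because $\overline{X}_A$ is the mean of precisely the $m_n$ order statistics being summed) reduces the claim to $\sqrt{n}(\overline{X}_A-\mu_A)^2\xrightarrow{\,\mathbb{P}\,}0$, which then follows from the asymptotic normality of $\sqrt{n}(\overline{X}_A-\mu_A)$. The only cosmetic difference is that the paper factors this last quantity as $\sqrt{n}(\overline{X}_A-\mu_A)\cdot(\overline{X}_A-\mu_A)$ and invokes both Lemma~\ref{lm:consistency_cond_mean} and Lemma~\ref{lm:asymptotic_distribution_trimmed_mean}, whereas you write it as $n^{-1/2}\bigl(\sqrt{n}(\overline{X}_A-\mu_A)\bigr)^2$ and need only the latter; both are valid.
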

\begin{proof}
As in the proof of Lemma \ref{lm:asymptotic_distribution_trimmed_mean}, we focus on the case $0<\alpha<\beta<1$. Let $A=A[\alpha,\beta]$ and note that
\begin{align*}
\hat{s}^2_{A} & = \frac{1}{m_n} \sum_{i=[n\alpha]+1}^{[n\beta]} \left(X_{(i)}-\overline{X}_{A}+\overline{X}_{A}-\mu_A\right)^2\\
& =  \frac{1}{m_n}\sum_{i=[n\alpha]+1}^{[n\beta]} \left(X_{(i)}-\overline{X}_{A}\right)^2  +\left(\overline{X}_{A}-\mu_A\right)^2 \\
& \phantom{=} +\frac{2}{m_n}\left(\overline{X}_{A}-\mu_A\right)\sum_{i=[n\alpha]+1}^{[n\beta]}\left(X_{(i)}-\overline{X}_{A}\right),
\end{align*}
where the last summand equals $0$ since 
\[
\sum_{i=[n\alpha]+1}^{[n\beta]}X_{(i)}=m_n\overline{X}_{A}=\sum_{i=[n\alpha]+1}^{[n\beta]}\overline{X}_{A} \\.
\]
Consequently, we get
$
\sqrt{n}\left(\hat{s}^2_{A}-\hat{\sigma}^2_{A}\right) =\sqrt{n}\left(\overline{X}_{A}-\mu_A\right)^2.
$
Thus, using Lemma \ref{lm:consistency_cond_mean} combined with Lemma \ref{lm:asymptotic_distribution_trimmed_mean}, we conclude the proof. \qed
\end{proof}

Now, we study the asymptotic behaviour of the conditional variance estimator; this is a key lemma that will be used in the proof of Theorem~\ref{th:asymptotic_distr_test_stat}. Moreover, this result may be of independent interest since it allows one to construct the asymptotic confidence interval for the conditional variance.

\begin{lemma}\label{lm:asymptotic_distribution_sigma2M}
For any $A=A[\alpha,\beta]$ it follows that
\begin{equation*}
\sqrt{n}\left(\hat{\sigma}^2_{A}-\sigma^2_A\right)\xrightarrow{d}\mathcal{N}(0,\tau_A),
\end{equation*}
where
\begin{align}
\tau^2_A & :=  \frac{1}{(\beta-\alpha)^2}\Big((\beta-\alpha)(\sigma^2_A)^2(\kappa_A-1) \nonumber\\
& \quad {} +\alpha(1-\alpha)\left((a-\mu_A)^2-\sigma^2_A\right)^2 +\beta(1-\beta)\left((b-\mu_A)^2-\sigma^2_A\right)^2 \nonumber \\
& \quad {} -2\alpha(1-\beta)\left((a-\mu_A)^2-\sigma^2_A\right)\left((b-\mu_A)^2-\sigma^2_A\right)\Big).\label{eq:tau.A}\footnotemark
\end{align}
\end{lemma}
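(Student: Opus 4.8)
The plan is to reduce the statement to a central-limit argument for a single trimmed sum, reusing the decomposition developed in the proof of Lemma~\ref{lm:asymptotic_distribution_trimmed_mean} almost verbatim, and then to pin down the limiting variance by a direct covariance computation. First I would invoke Lemma~\ref{lm:equiv_asympt_distr}, which gives $\sqrt{n}\bigl(\hat{\sigma}^2_{A}-\hat{s}^2_{A}\bigr)\xrightarrow{\,\mathbb{P}\,}0$; by Slutsky's Theorem it then suffices to prove that $\sqrt{n}\bigl(\hat{s}^2_{A}-\sigma^2_A\bigr)\xrightarrow{\,d\,}\mathcal{N}(0,\tau_A)$ with the claimed variance. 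The key observation is that $\hat{s}^2_{A}=\tfrac{1}{m_n}\sum_{i=[n\alpha]+1}^{[n\beta]} g(X_{(i)})$, with $g(x):=(x-\mu_A)^2$, is a trimmed mean of the transformed observations $g(X_i)$ that are, however, ordered according to the $X_i$ themselves. This is precisely the structure handled in Lemma~\ref{lm:asymptotic_distribution_trimmed_mean}, so I would run the same argument with $g(X_{(i)})$ in place of $X_{(i)}$ and with the boundary constants $g(a)=(a-\mu_A)^2$ and $g(b)=(b-\mu_A)^2$ in place of $a$ and $b$, focusing (as in the earlier lemmas) on the non-degenerate case $0<\alpha<\beta<1$ and treating $\alpha=0$ or $\beta=1$ analogously, the boundary terms then vanishing since $\alpha(1-\alpha)=0$ or $\beta(1-\beta)=0$.

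Concretely, I would split the trimmed sum exactly as before,
\begin{align*}
\sum_{i=[n\alpha]+1}^{[n\beta]} g(X_{(i)}) ={} & \sum_{i=1}^n g(X_i)\1_{\{X_i\in A\}} + (A_n-[n\alpha])g(a) + ([n\beta]-B_n)g(b) \\
& {}+ \wsum_{i=[n\alpha]}^{A_n}\bigl(g(X_{(i)})-g(a)\bigr) + \wsum_{i=B_n}^{[n\beta]}\bigl(g(X_{(i)})-g(b)\bigr),
\end{align*}
and then show that, after multiplication by $\tfrac{\sqrt{n}}{m_n}$, the two directed sums of deviations converge to $0$ in probability. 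This is the analogue of \eqref{eq:2wsum1}--\eqref{eq:2wsum2}: bounding $\bigl|\tfrac{\sqrt{n}}{m_n}\wsum_{i=[n\alpha]}^{A_n}(g(X_{(i)})-g(a))\bigr|$ by $\bigl|\tfrac{A_n-[n\alpha]}{m_n/\sqrt{n}}\bigr|\max_i\abs{g(X_{(i)})-g(a)}$, the first factor is $O_{\mathbb{P}}(1)$ exactly as in Lemma~\ref{lm:asymptotic_distribution_trimmed_mean}, while the maximum tends to $0$ in probability because all the relevant order statistics are squeezed between $X_{([n\alpha])}$ and $X_{(A_n)}$, both of which converge in probability to $a$, and $g$ is continuous. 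Replacing $[n\alpha],[n\beta]$ by $n\alpha,n\beta$ via \eqref{eq:floor.nofloor}, I obtain
\[
\sqrt{n}\bigl(\hat{s}^2_{A}-\sigma^2_A\bigr)=\frac{n}{m_n}\cdot\frac{1}{\sqrt{n}}\sum_{i=1}^n W^A_i+r_n,\qquad r_n\xrightarrow{\,\mathbb{P}\,}0,
\]
where
\begin{align*}
W^A_i := {} & (X_i-\mu_A)^2\1_{\{X_i\in A\}} - (\beta-\alpha)\sigma^2_A \\
& {}+ (\1_{\{X_i\le a\}}-\alpha)(a-\mu_A)^2 + (\beta-\1_{\{X_i\le b\}})(b-\mu_A)^2.
\end{align*}
One checks at once that $\bE[W^A_i]=0$ (using $\bE[(X-\mu_A)^2\1_{\{X\in A\}}]=(\beta-\alpha)\sigma^2_A$ and $\bE[\1_{\{X\le a\}}]=\alpha$, $\bE[\1_{\{X\le b\}}]=\beta$), that the $W^A_i$ are i.i.d., and that they have finite variance since $X$ is normal. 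Applying the Central Limit Theorem together with $\tfrac{n}{m_n}\to\tfrac{1}{\beta-\alpha}$ and Slutsky's Theorem yields $\sqrt{n}\bigl(\hat{s}^2_{A}-\sigma^2_A\bigr)\xrightarrow{\,d\,}\mathcal{N}\bigl(0,\tfrac{1}{(\beta-\alpha)^2}\Var(W^A)\bigr)$.

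The main obstacle is then the bookkeeping that identifies $\tfrac{1}{(\beta-\alpha)^2}\Var(W^A)$ with the stated $\tau^2_A$. Writing $P:=(X-\mu_A)^2\1_{\{X\in A\}}$, $c_a:=(a-\mu_A)^2$, $c_b:=(b-\mu_A)^2$, so that $W^A=(P-\bE P)+c_a(\1_{\{X\le a\}}-\alpha)-c_b(\1_{\{X\le b\}}-\beta)$, I would expand $\Var(W^A)$ into variances and covariances and simplify each using the geometry of the events: $\Var(\1_{\{X\le a\}})=\alpha(1-\alpha)$ and $\Var(\1_{\{X\le b\}})=\beta(1-\beta)$; $\Cov(\1_{\{X\le a\}},\1_{\{X\le b\}})=\alpha(1-\beta)$ from $\{X\le a\}\subset\{X\le b\}$; $\Cov(P,\1_{\{X\le a\}})=-\alpha(\beta-\alpha)\sigma^2_A$ since $P$ vanishes on $\{X\le a\}$; $\Cov(P,\1_{\{X\le b\}})=(\beta-\alpha)(1-\beta)\sigma^2_A$ since $\{X\in A\}\subset\{X\le b\}$; and $\Var(P)=(\beta-\alpha)\kappa_A(\sigma^2_A)^2-(\beta-\alpha)^2(\sigma^2_A)^2$ from $\bE[(X-\mu_A)^4\1_{\{X\in A\}}]=(\beta-\alpha)\kappa_A(\sigma^2_A)^2$. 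Collecting these and regrouping the terms linear in $\sigma^2_A$ into the squares $(c_a-\sigma^2_A)^2$, $(c_b-\sigma^2_A)^2$ and the product $(c_a-\sigma^2_A)(c_b-\sigma^2_A)$ (the residual $\sigma^2_A$-only terms combining to the coefficient $-(\beta-\alpha)^2$ that merges with the $\kappa_A$ term into $(\kappa_A-1)$) reproduces \eqref{eq:tau.A} exactly. I expect this regrouping, rather than any probabilistic subtlety, to be the only genuinely laborious part of the argument, everything else being a transcription of Lemma~\ref{lm:asymptotic_distribution_trimmed_mean}.
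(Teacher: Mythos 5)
Your proposal is correct and follows essentially the same route as the paper: reduce to $\hat{s}^2_A$ via Lemma~\ref{lm:equiv_asympt_distr}, apply the trimmed-sum decomposition of Lemma~\ref{lm:asymptotic_distribution_trimmed_mean} to $(X_{(i)}-\mu_A)^2$, and conclude by the CLT and Slutsky's Theorem; your summand $W^A_i$ is the paper's $Y^A_i$ in expanded form, since the $\sigma^2_A$-coefficients there collapse to $\beta-\alpha$. The only difference is that you spell out the variance bookkeeping that the paper dismisses as ``straightforward computations,'' and your covariance identities and regrouping do reproduce \eqref{eq:tau.A} exactly.
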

\footnotetext{Note that for degenerate cases $\alpha=0$  and  $\beta=1$, we get $a=-\infty$ and $b=\infty$, respectively. In those cases, the convention $0\cdot \infty =0$ should be used.}
\begin{proof}
Due to Lemma \ref{lm:equiv_asympt_distr}, it is enough to consider $\hat{s}^2_{A}$ instead of $\hat{\sigma}^2_{A}$. For
\[
S_n^A:=\sqrt{n}\left(\hat{s}^2_{A}-\sigma^2_A\right),
\]
we get
\begin{align}\label{eq:Sn.l5}
S_n^A & =  \frac{\sqrt{n}}{m_n} \Bigg(\sum_{i=A_n+1}^{B_n} \left( \left( X_{(i)}-\mu_A \right)^2-\sigma^2_A \right)\nonumber \\
&\quad {} \quad {} + (A_n-[n\alpha])\left((a-\mu_A)^2-\sigma^2_A\right) + ([n\beta]-B_n)\left((b-\mu_A)^2-\sigma^2_A\right) \nonumber\\
&\quad {} + \wsum_{i=[n\alpha]}^{A_n} \left( \left( X_{(i)}-\mu_A \right)^2-\left( a-\mu_A \right)^2 \right) \nonumber\\
&\quad {} + \wsum_{i=B_n}^{[n\beta]} \left( \left( X_{(i)}-\mu_A \right)^2-\left( b-\mu_A \right)^2 \right)\Bigg).
\end{align}
By the arguments similar to the ones presented in the proof of Lemma \ref{lm:asymptotic_distribution_trimmed_mean}, we get
\[
\tfrac{\sqrt{n}}{m_n}\wsum_{i=[n\alpha]}^{A_n} \left( \left( X_{(i)}-\mu_A \right)^2-\left( a-\mu_A \right)^2 \right)\xrightarrow{\,\mathbb{P}\,}0,
\]
and
\[
\tfrac{\sqrt{n}}{m_n}\wsum_{i=B_n}^{[n\beta]} \left( \left( X_{(i)}-\mu_A \right)^2-\left( b-\mu_A \right)^2 \right)\xrightarrow{\,\mathbb{P}\,}0.
\]
Thus, recalling \eqref{eq:floor.nofloor}, we can rewrite \eqref{eq:Sn.l5} as
\begin{align*}
S_n^A = & \frac{\sqrt{n}}{m_n} \Bigg(\sum_{i=A_n+1}^{B_n} \left( \left( X_{(i)}-\mu_A \right)^2-\sigma^2_A \right) \nonumber\\
& + (A_n-n\alpha)\left((a-\mu_A)^2-\sigma^2_A\right) + (n\beta-B_n)\left((b-\mu_A)^2-\sigma^2_A\right)\Bigg)+r_n,
\end{align*}
where $r_n\xrightarrow{\mathbb{P}}0$. Next, for $i=1,\ldots,n$, we set
\begin{align}
Y^A_i & :=  \left( \left( X_{i}-\mu_A \right)^2-\sigma^2_A \right) \1_{\{X_i\in A\}}\nonumber\\
& \quad {}+\left(\1_{\{X_i\leq a\}}-\alpha\right)\left((a-\mu_A)^2-\sigma^2_A\right) \nonumber\\
& \quad {} +\left(\beta-\1_{\{X_i\leq b\}}\right)\left((b-\mu_A)^2-\sigma^2_A\right)\label{eq:YA},
\end{align}
and by straightforward computations we get
$
\bE[Y^A_i]=0$ and $D^2[Y^A_i]=(\beta-\alpha)^2 \tau^2_A$. Consequently, noting that
\begin{equation}\label{eq:S_n}
S_n^A= \frac{\sqrt{n}}{m_n} \sum_{i=1}^n Y^A_i + r_n,
\end{equation}
and using the Central Limit Theorem combined with the Slutsky's Theorem, we conclude the proof. \qed
\end{proof}
Finally, we are ready to show the proof of Theorem~\ref{th:asymptotic_distr_test_stat}.

\begin{proof}[of Theorem~\ref{th:asymptotic_distr_test_stat}]
For conditioning sets $L$, $M$, and $R$ given in \eqref{eq:LMR.true}, we define the associated sequences of random variables $(Y_i^L)$, $(Y_i^M)$, $(Y_i^R)$ using \eqref{eq:YA}. For any $n\in\bN$, we set 
\[
Z_n:=\sqrt{n}\begin{bmatrix}
\frac{1}{n}\sum_{i=1}^n \frac{1}{\tilde{q}}Y_i^L \\
\frac{1}{n}\sum_{i=1}^n \frac{1}{1-2\tilde{q}}Y_i^M \\
\frac{1}{n}\sum_{i=1}^n \frac{1}{\tilde{q}}Y_i^R
\end{bmatrix},
\]
where $\tilde{q}$ is defined via \eqref{eq:eq_for_tilde_q}. By the multivariate Central Limit Theorem  (cf. \cite[Theorem 5]{Ferguson1996}), we get
$
Z_n\xrightarrow{\,d\,}\mathcal{N}_3(0,\Sigma),
$
where
\[
\Sigma:= \begin{bmatrix}
\frac{\Cov(Y_1^L,Y_1^L)}{\tilde{q}^2}  & \frac{\Cov(Y_1^M,Y_1^L)}{\tilde{q}(1-2\tilde{q})} & \frac{\Cov(Y_1^R,Y_1^L)}{\tilde{q}^2} \\
\frac{\Cov(Y_1^L,Y_1^M)}{\tilde{q}(1-2\tilde{q})} & \frac{\Cov(Y_1^M,Y_1^M)}{(1-2\tilde{q})^2}  & \frac{\Cov(Y_1^R,Y_1^M)}{\tilde{q}(1-2\tilde{q})} \\
\frac{\Cov(Y_1^L,Y_1^R)}{\tilde{q}^2} & \frac{\Cov(Y_1^M,Y_1^R)}{\tilde{q}(1-2\tilde{q})} & \frac{\Cov(Y_1^R,Y_1^R)}{\tilde{q}^2} 
\end{bmatrix}.
\]
Now, let
$
S_n :=\sqrt{n}\left(\hat{\sigma}^2_L+\hat{\sigma}^2_R-2\hat{\sigma}^2_M\right).
$
Using \eqref{rq:LMR}, it is easy to see that
\begin{equation}\label{eq:Sn.final}
S_n=\sqrt{n}\left(\hat{\sigma}^2_L-\sigma^2_L+\hat{\sigma}^2_R-\sigma^2_R-2\hat{\sigma}^2_M+2\sigma^2_M\right).
\end{equation}
Consequently, by the arguments similar to the ones presented in the proof of Lemma \ref{lm:asymptotic_distribution_sigma2M} (see \eqref{eq:S_n}), we can rewrite \eqref{eq:Sn.final} as
$
S_n=M_n Z_n +r_n,
$
where $r_n\xrightarrow{\,\mathbb{P}\,}0$ and
\[
M_n:=\begin{bmatrix}
\frac{n\tilde{q}}{[n\tilde{q}]}, & -2\frac{n(1-2\tilde{q})}{[n(1-\tilde{q})]-[n\tilde{q}]}, & \frac{n\tilde{q}}{n-[n(1-\tilde{q})]}
\end{bmatrix}.
\]
Next, observing that $M_n\xrightarrow{\mathbb{P}} [1,-2,1]$ and using the multivariate Slutsky's Theorem (cf. \cite[Theorem 6]{Ferguson1996}), we get
$
S_n\xrightarrow{\,d\,} \mathcal{N}(0,\tau),
$
where
\begin{equation}\label{eq:tau}
\tau:=\sqrt{[1,-2,1] \,\Sigma\, [1,-2,1]^T}.
\end{equation}
Let
$
\rho:=\frac{\tau}{\sigma^2}
$
and
$
N_n:=\frac{1}{\rho}\frac{S_n}{\hat \sigma^2}.
$
Observing that $\sigma^2 / \hat{\sigma}^2_n\xrightarrow{\,\mathbb{P}\,}1$, and again using the Slutsky's Theorem, we get
$
N_n\xrightarrow{\,d\,} \mathcal{N}(0,1).
$
To conclude the proof of Theorem~\ref{th:asymptotic_distr_test_stat}, we need to show that $\rho$ is independent of $\mu$ and $\sigma$.

To do so, let us first show that for any $A$, where $A=A[\alpha,\beta]$, and the corresponding random variable $Y^A_1$ given in \eqref{eq:YA}, we get
\begin{equation}\label{eq:Y.norm}
Y^A_1 = \sigma^2 \psi(\tilde X_1,\alpha,\beta)\,,
\end{equation}
where $\tilde X_1 := (X_1-\mu)/\sigma$ and
$
\psi\colon \bR\times [0,1]\times [0,1] \to \bR
$
is some fixed measurable function. From \cite[Section 13.10.1]{JohKotBal1994}, we know that
\begin{align}
\mu_A & = \sigma \frac{\phi(\Phi^{-1}(\alpha))-\phi(\Phi^{-1}(\beta))}{\beta-\alpha}+\mu,\label{eq:explicit_mu_A}\\
\sigma^2_A & = \sigma^2\left(\frac{\Phi^{-1}(\alpha)\phi(\Phi^{-1}(\alpha))-\Phi^{-1}(\beta)\phi(\Phi^{-1}(\beta))}{\beta-\alpha}\right. \nonumber\\
& \phantom{=} \left. -\frac{\left(\phi(\Phi^{-1}(\alpha))-\phi(\Phi^{-1}(\beta))\right)^2}{(\beta-\alpha)^2}+1\right).\label{eq:explicit_sigma_A}\footnotemark
\end{align}
\footnotetext{For $\alpha=0$ or $\beta=1$ the convention $0\cdot \pm\infty =0$ is used.}
Consequently, the standardised mean $\tilde\mu_A :=\frac{\mu_A-\mu}{\sigma}$ and variance $\tilde\sigma_A^2:=\frac{\sigma^2_A}{\sigma^2}$ depend only on $\alpha$ and $\beta$. Recalling \eqref{eq:YA}, we get
\begin{align}
\frac{Y^A_1}{\sigma^2} & =  \left( \left( \frac{X_1-\mu_A}{\sigma} \right)^2-\tilde\sigma^2_A \right) \1_{\{X_1\in A\}} +\left(\1_{\{X_1\leq a\}}-\alpha\right)\left(\left(\frac{a-\mu_A}{\sigma}\right)^2-\tilde\sigma^2_A\right) \nonumber \\
& \quad {}  +\left(\beta-\1_{\{X_1\leq b\}}\right)\left(\left(\frac{b-\mu_A}{\sigma}\right)^2-\tilde\sigma^2_A\right) \nonumber \\
& =\left( \left(\tilde X_1-\tilde\mu_A \right)^2-\tilde\sigma^2_A \right) \1_{\{X_i\in A\}} +\left(\1_{\{X_1\leq a\}}-\alpha\right)\left(\left(\Phi^{-1}(\alpha)-\tilde\mu_A\right)^2-\tilde\sigma^2_A\right)  \nonumber \\
& \quad {} +\left(\beta-\1_{\{X_1\leq b\}}\right)\left(\left(\Phi^{-1}(\beta)-\tilde\mu_A\right)^2-\tilde\sigma^2_A\right).
\end{align}
Combining this with equalities
\begin{align*}
\{X_1\in A\}&= \{ \tilde X_1 \in [\Phi^{-1}(\alpha), \Phi^{-1}(\beta)) \},\\
\{X_1\leq a\}&= \{ \tilde X_1 \leq \Phi^{-1}(\alpha) \},\\
\{X_1\leq b\}&= \{ \tilde X_1 \leq \Phi^{-1}(\beta) \},
\end{align*}
we conclude the proof of \eqref{eq:Y.norm}.

Now, using \eqref{eq:Y.norm} for $L$, $M$, and $R$, and expressing $\Sigma / \sigma^4$  as
\[
\begin{bmatrix}
\frac{\Cov\left(\frac{Y_1^L}{\sigma^2},\frac{Y_1^L}{\sigma^2}\right)}{\tilde{q}^2}  & \frac{\Cov\left(\frac{Y_1^M}{\sigma^2},\frac{Y_1^L}{\sigma^2}\right)}{\tilde{q}(1-2\tilde{q})} & \frac{\Cov\left(\frac{Y_1^R}{\sigma^2},\frac{Y_1^L}{\sigma^2}\right)}{\tilde{q}^2} \\
\frac{\Cov\left(\frac{Y_1^L}{\sigma^2},\frac{Y_1^M}{\sigma^2}\right)}{\tilde{q}(1-2\tilde{q})} & \frac{\Cov\left(\frac{Y_1^M}{\sigma^2},\frac{Y_1^M}{\sigma^2}\right)}{(1-2\tilde{q})^2}  & \frac{\Cov\left(\frac{Y_1^R}{\sigma^2},\frac{Y_1^M}{\sigma^2}\right)}{\tilde{q}(1-2\tilde{q})} \\
\frac{\Cov\left(\frac{Y_1^L}{\sigma^2},\frac{Y_1^R}{\sigma^2}\right)}{\tilde{q}^2} & \frac{\Cov\left(\frac{Y_1^M}{\sigma^2},\frac{Y_1^R}{\sigma^2}\right)}{\tilde{q}(1-2\tilde{q})} & \frac{\Cov\left(\frac{Y_1^R}{\sigma^2},\frac{Y_1^R}{\sigma^2}\right)}{\tilde{q}^2} 
\end{bmatrix},
\]
\noindent we see that $\Sigma / \sigma^4$ does not depend on $\mu$ and $\sigma$.

Finally, recalling \eqref{eq:tau} and the definition of $\rho$ we conclude the proof of Theorem~\eqref{th:asymptotic_distr_test_stat}; we refer to Appendix~\ref{A:rho} for the closed-form formula for $\rho$. \qed
\end{proof}

The results presented in this section could be directly applied to various other non-parametric quantile estimators and to the unbiased variance estimators.\footnote{In particular, this refers to the estimator implemented via {\it quantile} function in {\bf R} that was used in Figure~\ref{fig:Rcode}.} This is summarised in the next two remarks.
\begin{remark}\label{rmk:unbiased_var_est}
The standard formula for the whole sample (unbiased) variance uses $n-1$ instead of $n$ in the denominator. In the conditional case, this would be reflected in the different formula for \eqref{eq:cond_var_est}, where  $m_n$ is replaced by $m_n-1$. Note that the statement of Theorem \ref{th:asymptotic_distr_test_stat} remains valid for the modified conditional variance estimator due to the combination of the Slutsky's Theorem and the fact that $(m_n-1)/m_n\to 1$.
\end{remark}

\begin{remark}\label{rmk:other_quantile_est}
When defining the conditional sample variance \eqref{eq:cond_var_est}, we used $[n\alpha]+1$ and $[n\beta]$ as the limits of the summation in \eqref{eq:cond_var_est} and \eqref{eq:cond_mean_est}. This choice corresponds to the non-parametric $\alpha$-quantile estimator given by $X_{([n\alpha])}$.

In the literature there exist many different formulas for non-parametric quantile estimators, most of which are bounded by the nearest order statistics; see \cite{HyndFan1996} for details. It is relatively easy to show that all results presented in this section hold true if we replace $[n\alpha]$ and $[n\beta]$ by suitably chosen sequences that correspond to different empirical quantile choices. For completeness, we provide a more detailed description of this statement.

Consider sequences $(\alpha_n)$ and $(\beta_n)$ such that $n\alpha-\alpha_n$ and $\beta_n-n\beta$ are bounded, and define
$\tilde{m}_n:=\beta_n-\alpha_n$. The corresponding conditional sample mean and variance are given by
\[
\bar{X}_{A}^{\ast} :=\frac{1}{\tilde{m}_n}\sum_{i=\alpha_n+1}^{\beta_n}X_{(i)}\quad\textrm{and}\quad \hat{\sigma}^{2,\ast}_{A} :=\frac{1}{\tilde{m}_n}\sum_{i=\alpha_n+1}^{\beta_n} \left(X_{(i)}-\bar{X}_{A}^{\ast}\right)^2.
\]
Then, we can replace $\overline{X}_A$ and $\hat{\sigma}^2_A$ by $\bar{X}_{A}^{\ast}$ and $\hat{\sigma}^{2,\ast}_{A}$ in Theorem~\ref{th:asymptotic_distr_test_stat} as well as in all lemmas presented in the section.

Instead of showing a full proof, we briefly comment how to show consistency of quantile estimators as well as comment on counterparts of \eqref{eq:convergence_nalpha} and \eqref{eq:convergence_nalpha_sqrtn}. All proofs could be translated using a very similar logic.

First, noting that for some $k\in \bN$ we get $X_{([n\alpha]-k)}\leq X_{(\alpha_n)}\leq X_{([n\alpha]+k)}$ and 
$X_{([n\beta]-k)}\leq X_{(\beta_n)}\leq X_{([n\beta]+k)}$,
it is straightforward to check that $X_{(\alpha_n)}$ and $X_{(\beta_n)}$ are consistent $\alpha$-quantile and $\beta$-quantile estimators; see e.g. \cite[Section 2.3]{Serfling1980}. Second, to show the analogue of \eqref{eq:convergence_nalpha}, it is enough to use the boundedness of $n\alpha-\alpha_n$ and $\beta_n-n\beta$, and note that $\frac{n\alpha-\alpha_n}{n}\to 0$ and $\frac{\beta_n-n\beta}{n}\to 0$. Third, to show \eqref{eq:convergence_nalpha_sqrtn}, it is enough to use boundedness of $n\alpha-\alpha_n$ and note that for some $k\in\bN$ we get
$
\frac{\abs{n\alpha-\alpha_n}}{\tilde{m}_n/\sqrt{n}}\leq \frac{k}{\tilde{m}_n}\sqrt{n}=\frac{\tilde{m}_n}{n}\frac{k}{\sqrt{n}}
$.
\end{remark}


\section{Empirical example: case study of market stock returns}\label{S:2}
In this section we apply the proposed framework to stock market returns performing a basic sanity-check verification. Before we do that, let us comment on the connection between the 20-60-20 Rule and financial time series. 

Assuming that $X$ describes financial asset return rates we can split the population using 20/60/20 ratio and check the behaviour of returns within each subset. If non-normal perturbations are observed only for extreme events, the 20/60/20 break might identify the regime switch and provide a good spatial clustering; this could be linked to a popular financial \textit{stylised fact} saying that average financial asset returns tend to be normal, but the extreme returns are not -- see \cite{Cont2001} and \cite{SheQia2010} for details. It should be emphasized that according to authors' best knowledge, the link between this property and data non-normality was not discussed in the literature before.

The easiest way to verify this hypothesis is to take stock return samples for different periods, make the quantile--quantile plots (with standard normal as a reference distribution) and check if the clustering is accurate. In Figure~\ref{F:fig1}, we present exemplary results for two major US stocks, namely GOOGL and AAPL, and two major stock indices, namely S\&P500 and DAX; we took time-series of length 250 for different time intervals ranging in the period from 10/2015 to 01/2018.\footnote{Data is downloaded from \href{https://finance.yahoo.com/}{Yahoo Finance} via {\bf R} \textit{tidyquant} package.}

\begin{figure}[htp!]
\begin{center}
\includegraphics[width=0.35\textwidth]{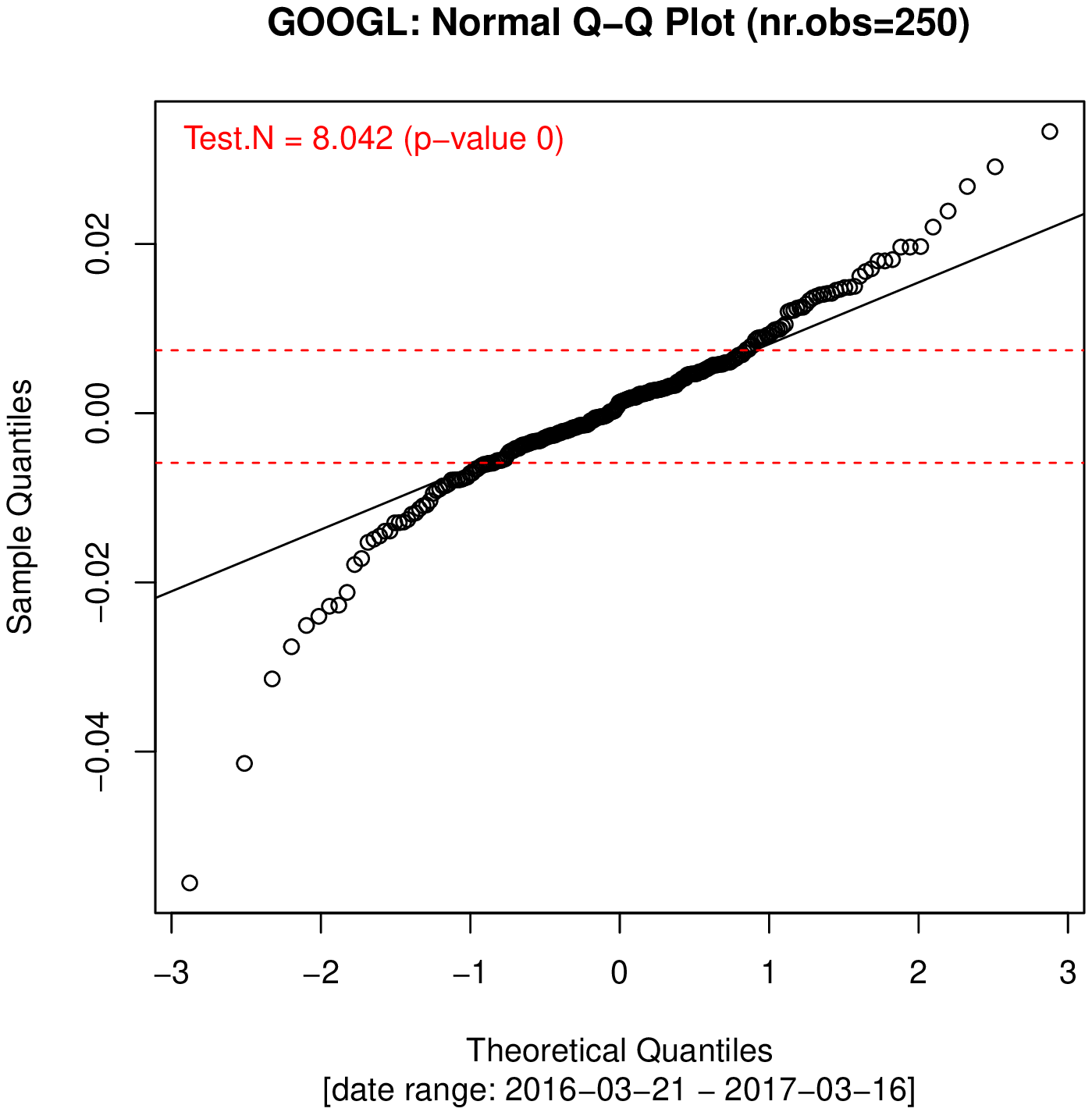}
\includegraphics[width=0.35\textwidth]{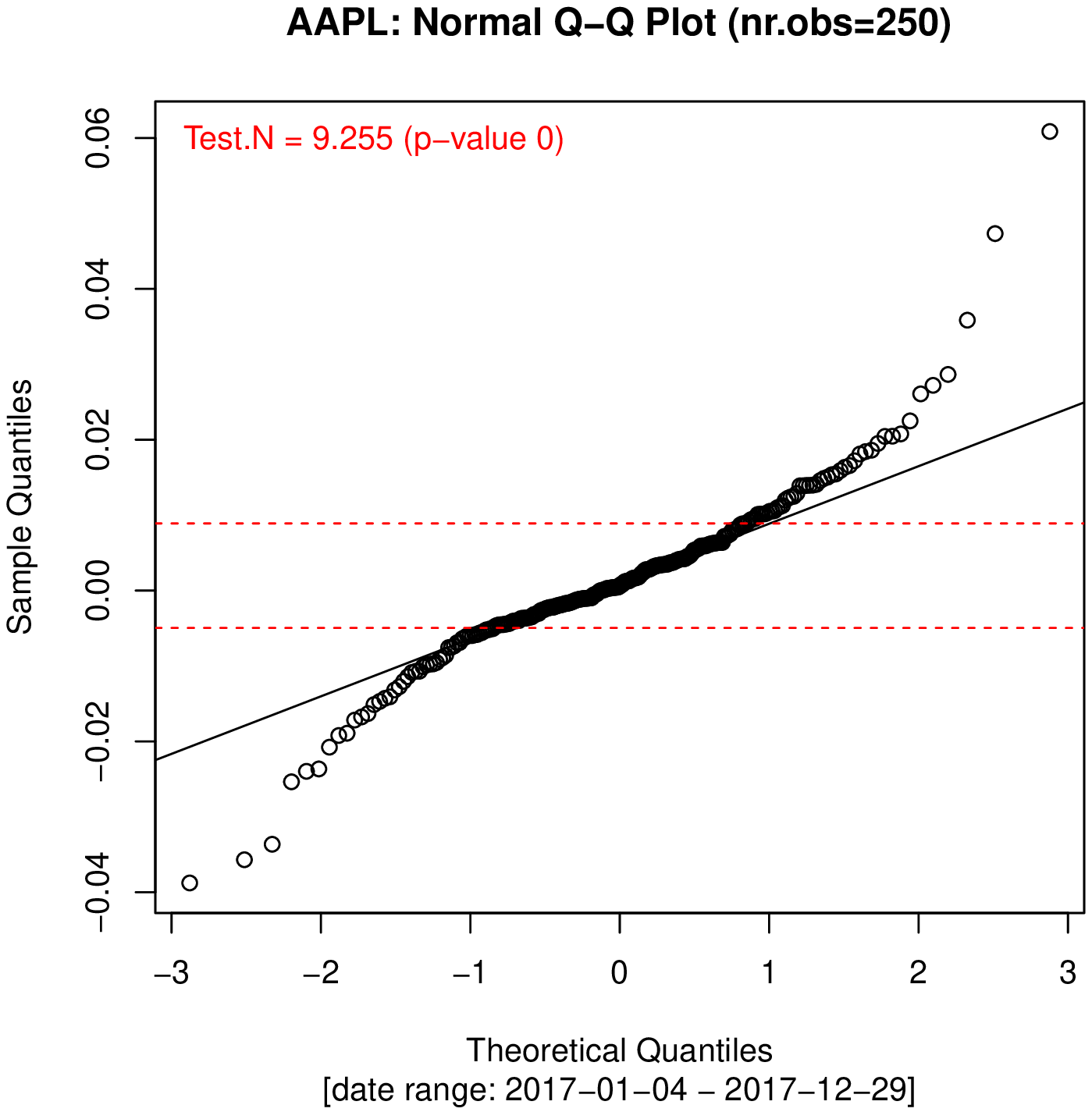}\\
\includegraphics[width=0.35\textwidth]{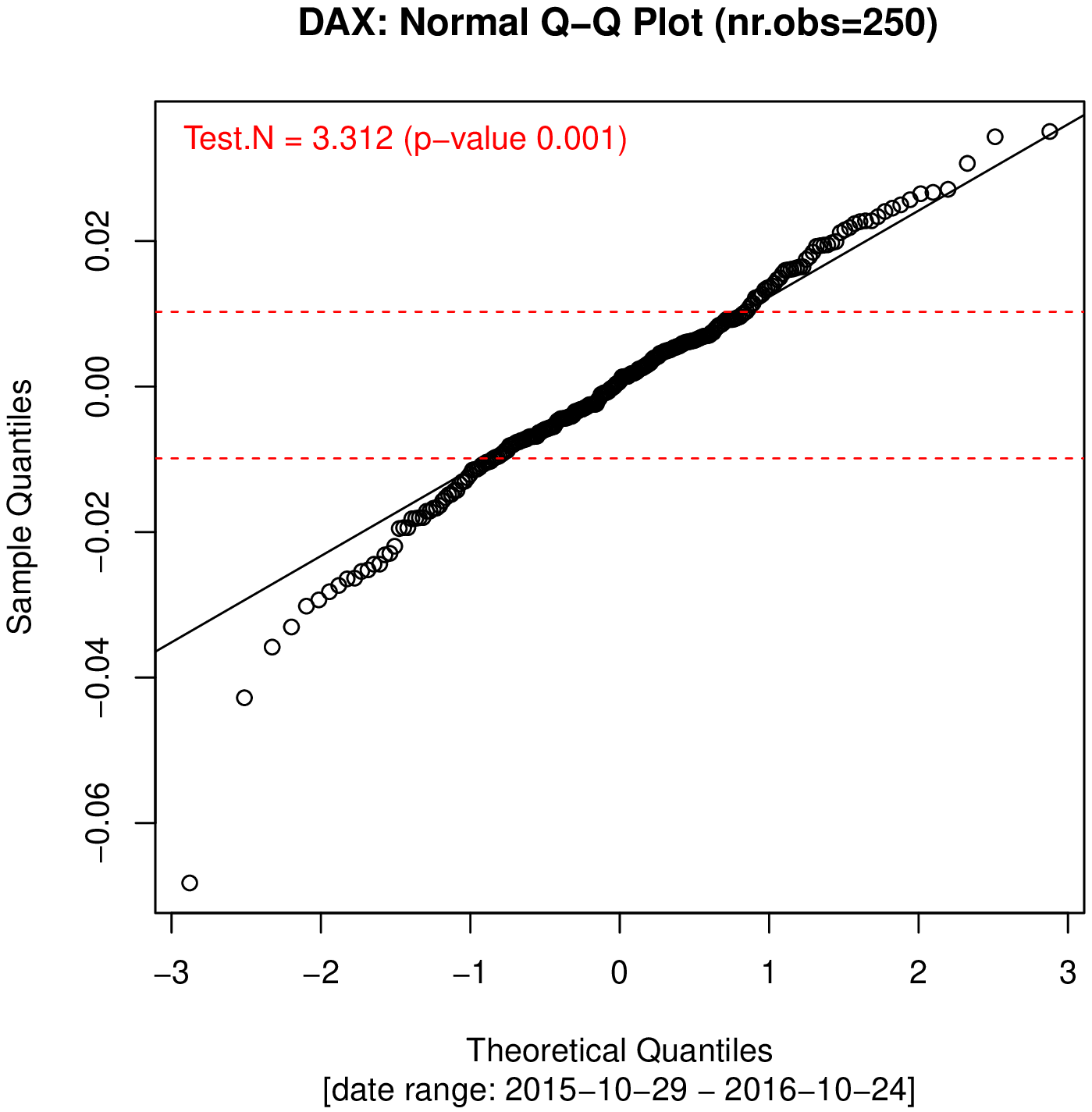}
\includegraphics[width=0.35\textwidth]{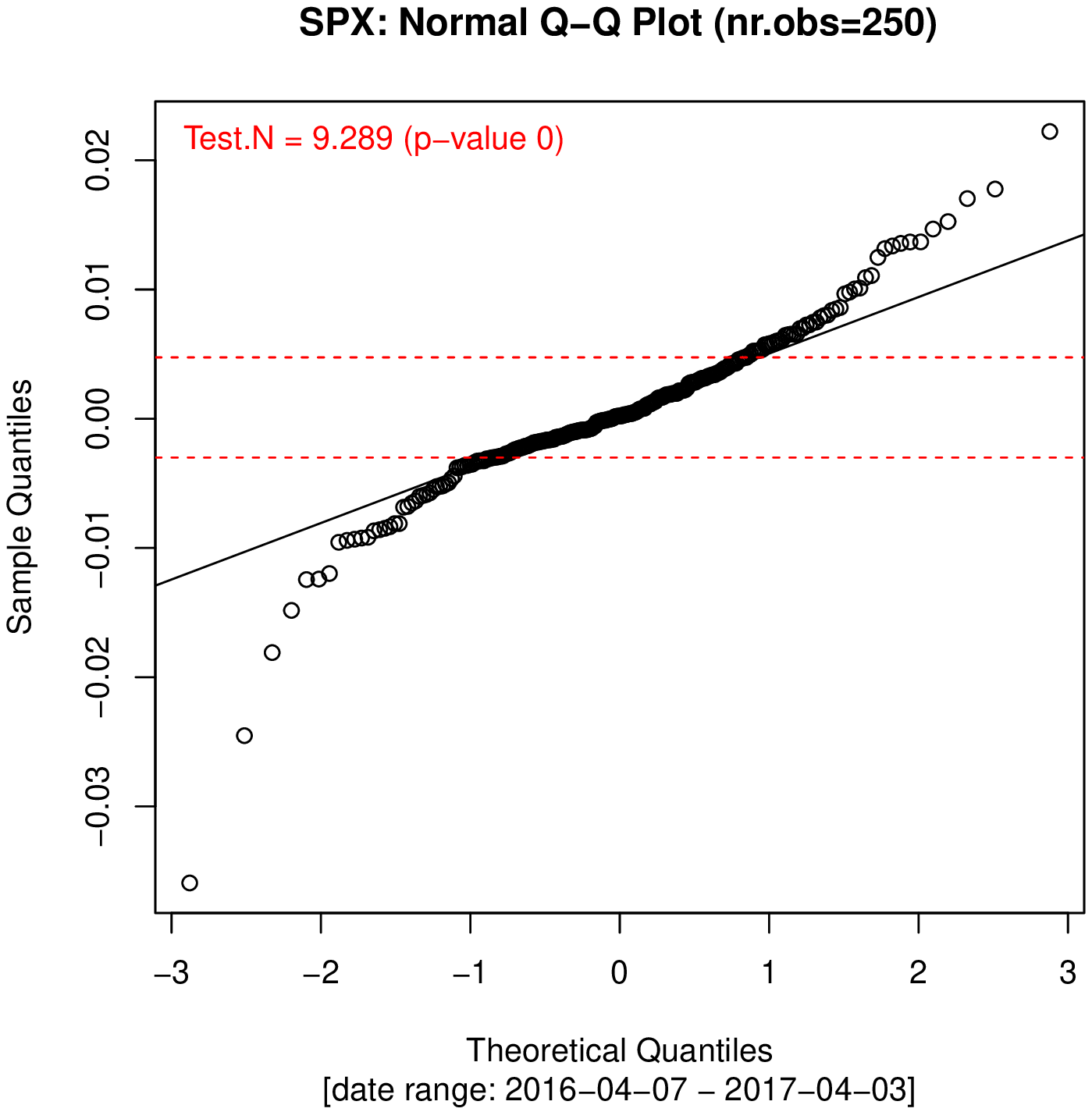}
\end{center}
\caption{Quantile-Quantile plots for various financial returns of size $n=250$. The dashed lines correspond to 20\% and 80\% quantiles. One can see different data behaviour in each cluster: while the central region is aligned with normal distribution the tail regions are not.}
\label{F:fig1}
\vspace{-0.3cm}
\end{figure}

From Figure~\ref{F:fig1} we see that this division is surprisingly accurate: a very good normal fit is observed in the $M$ set (middle 60\% of observations), while the fit in the tail sets $L$ and $R$ (bottom and top $20\%$ of observations) is bad.  By taking different sample sizes, different time-horizons, and different stocks we can confirm that this property is systematic, i.e. the results are almost always similar to the ones presented in Figure~\ref{F:fig1}.

While the presence of fat-tails in asset return distributions is a well-known observation in the financial world, it is quite surprising to note that the non-normal behaviour could be seen for approximately 40\% of data. Also, test statistic $N$ can be used to formally quantify this phenomenon and to measure tail heaviness: the bigger the conditional standard deviation in the tails (in reference to the central part), the fatter the tails.

In the following, we focus on assessing the performance of the test statistic $N$ on market data. We perform a simple empirical study and take returns of all stocks listed in S\&P500 index on 16.06.2018 that have full historical data in the period from 01.2000 to 05.2018. This way we get full data (4610 daily adjusted close price returns) for 381 stocks. Next, for a given sample size $n\in\{50, 100, 250\}$ we split the returns into disjoint sets of length $n$, and for each subset we compare the value of $N$ with the corresponding empirical quantiles presented in Figure~\ref{F:fig2}. More precisely, using $N$ we perform a right-sided statistical test and reject normality (null) hypothesis if the computed value is greater than the empirical value $F^{-1}_{n}(1-\alpha)$, for $\alpha\in\{1\%, 2.5\%, 5\%\}$.

To assess test performance, we compare the results with other benchmark normality tests: Jarque--Bera test, Anderson--Darling test, and Shapiro--Wilk test. While the non-normality of returns is a well-known fact, and all testing frameworks should show good performance, we want to check if our framework leads to some new interesting results. We check the normality hypothesis and compute two supplementary metrics that are used for performance assessment:
\begin{enumerate}[-]
\item {\bf Statistic T} gives the {\bf total rejection ratio} of a given test. It corresponds to the proportion of data on which the normality assumption was rejected at a given significance level; it is the ratio of rejected data subsamples to all data subsamples.
\item {\bf Statistic U} gives the {\bf unique rejection ratio} of a given test. It corresponds to the proportion of data on which normality assumption was rejected at a given significance level only by the considered test (among all four tests); it is the ratio of uniquely rejected data subsamples to all data subsamples.
\end{enumerate}

\noindent The combined results for all values of $n$ and $\alpha$ are presented in Table~\ref{T:tab1}. 
\begin{table}[htp!]
\centering
\scalebox{0.8}{
\begin{tabular}{l|l|c|c|c|rrrr}\toprule
Desc & nr runs& $\alpha$ & n &rejects&  JB& AD & SW &N\\ \midrule
T & \multirow{2}{*}{35052} &\multirow{2}{*}{1.0\%}&\multirow{2}{*}{50}&\multirow{2}{*}{31.5\%}&  {\bf 25.9}\% &17.3\% &23.2\% &{\bf 25.9}\%\\ 
 U & &&  						 &&1.9\% &0.6\% &0.3\% &{\bf 3.1}\% \\  \midrule
T & \multirow{2}{*}{35052} &\multirow{2}{*}{2.5\%}&\multirow{2}{*}{50}&\multirow{2}{*}{39.6\%}&  32.4\% &22.9\% &28.3\% &{\bf 32.5}\%\\ 
 U & &&  						 &&2.4\% &0.9\% &0.3\% &{\bf 3.9}\% \\   \midrule
T & \multirow{2}{*}{35052} &\multirow{2}{*}{5.0\%}&\multirow{2}{*}{50}&\multirow{2}{*}{47.9\%}&  38.6\% &29.1\% &33.7\% &{\bf 39.6}\%\\ 
 U & &&  						 &&2.4\% &1.3\% &0.4\% &{\bf 5.1}\% \\  \midrule
T & \multirow{2}{*}{17526} &\multirow{2}{*}{1.0\%}&\multirow{2}{*}{100}&\multirow{2}{*}{52.8\%}&  45.2\% &31.8\% &41.3\% &{\bf 46.1}\%\\ 
 U & &&  						 &&2.2\% &0.6\% &0.3\% &{\bf 4.4}\% \\ \midrule
T & \multirow{2}{*}{17526} &\multirow{2}{*}{2.5\%}&\multirow{2}{*}{100}&\multirow{2}{*}{61.3\%}&  52.9\% &38.8\% &47.6\% &{\bf 54.3}\%\\ 
 U & &&  						 &&2.2\% &0.7\% &0.2\% &{\bf 5.1}\% \\  \midrule
T & \multirow{2}{*}{17526} &\multirow{2}{*}{5.0\%}&\multirow{2}{*}{100}&\multirow{2}{*}{68.4\%}&  59.7\% &45.7\% &53.4\% &{\bf 61.3}\%\\ 
 U & &&  						 &&2.2\% &0.8\% &0.2\% &{\bf 5.3}\% \\ \midrule
T & \multirow{2}{*}{6858} &\multirow{2}{*}{1.0\%}&\multirow{2}{*}{250}&\multirow{2}{*}{88.5\%}&  82.1\% &71.2\% &79.3\% &{\bf 85.4}\%\\ 
 U & &&  						 &&1.0\% &0.4\% &0.1\% &{\bf 3.8}\% \\ \midrule
T & \multirow{2}{*}{6858} &\multirow{2}{*}{2.5\%}&\multirow{2}{*}{250}&\multirow{2}{*}{91.8\%}&  86.8\% &77.7\% &83.7\% &{\bf 89.4}\%\\ 
 U & &&  						 &&0.7\% &0.2\% &0.1\% &{\bf 3.0}\% \\  \midrule
T & \multirow{2}{*}{6858} &\multirow{2}{*}{5.0\%}&\multirow{2}{*}{250}&\multirow{2}{*}{93.9\%}&  89.7\% &82.4\% &86.9\% &{\bf 92.0}\%\\ 
 U & &&  						 &&0.5\% &0.2\% &0.0\% &{\bf 2.5}\% \\  \midrule
\end{tabular}
}
\caption{The table contains results of performance tests on the market data. Column {\it Desc} gives the name of performance measure, {\it nr runs} indicates how many subsamples were created for a given $n$, the value $\alpha$ gives the confidence level, and column {\it rejects} gives ratio of subsample for which at least one test rejected the normality assumption. The acronyms JB, AD, SW, and N refer to Jarque--Bera Test, Anderson--Darling Test, Shapiro--Wilk Test, and $N$ test, respectively. Best performance is marked in bold.}
\label{T:tab1}
\end{table}

\begin{figure}[htp!]
\begin{center}
\includegraphics[width=0.45\textwidth]{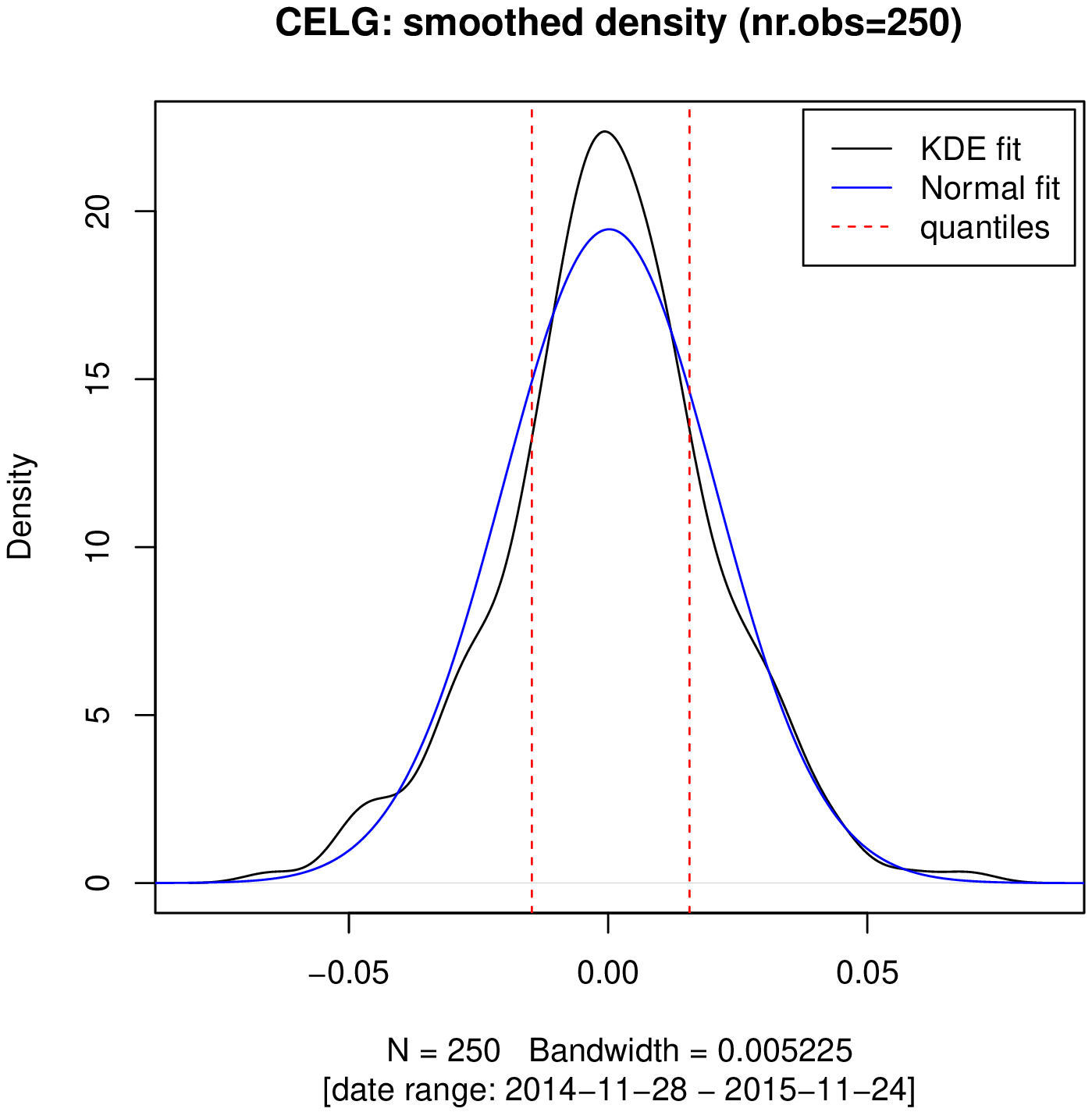}
\includegraphics[width=0.45\textwidth]{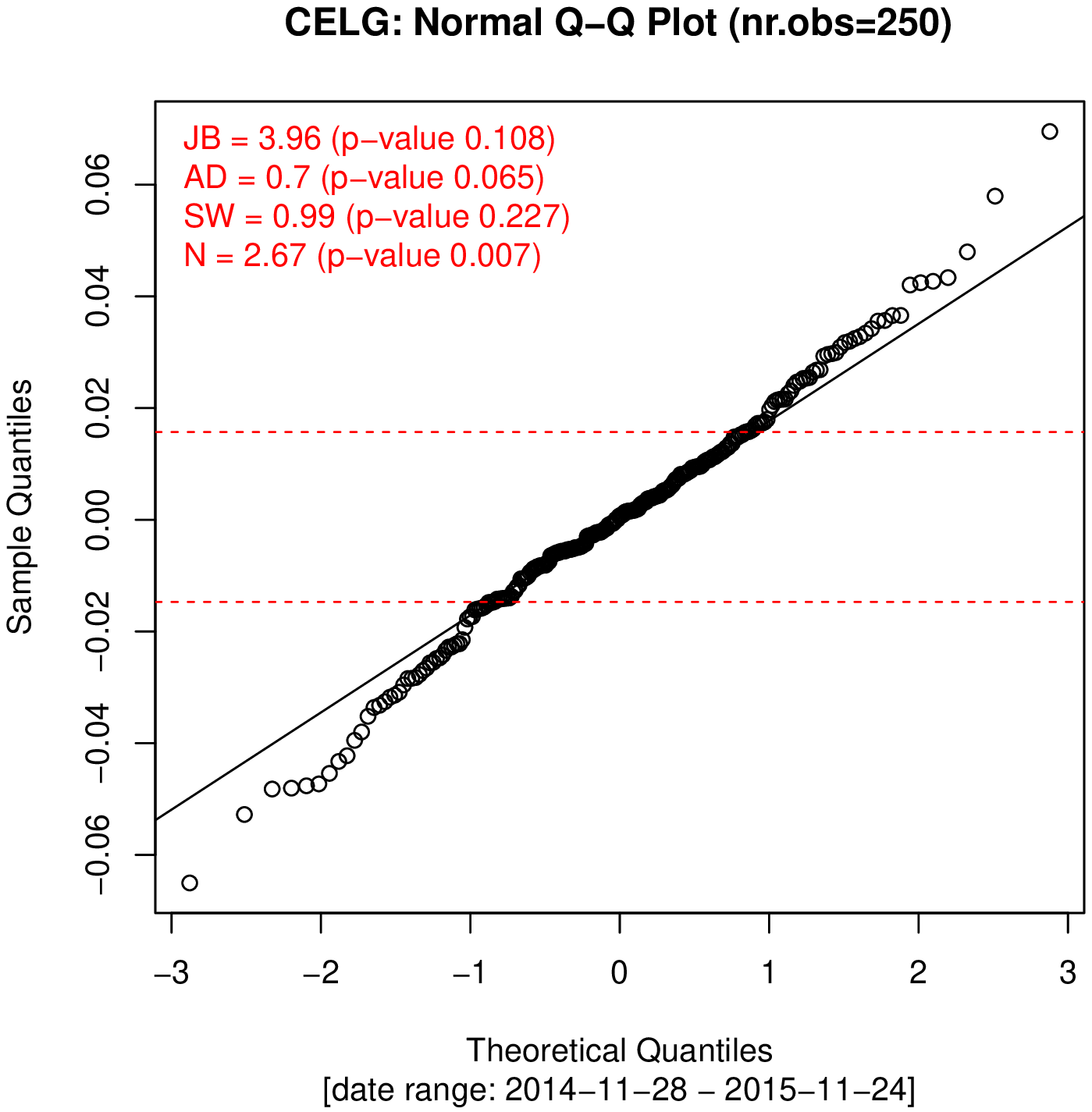}
\end{center}
\vspace*{-0.5cm}
\caption{Exemplary time series for which only test $N$ rejected normality at level 1\%. KDE fit corresponds to empirical density obtained with Kernel Density Estimation, while normal fit correspond to a standard normal fit. Empirical quantiles corresponding to 20/60/20 ratio are marked with dashed line.}
\label{F:fig3}
\vspace*{-0.3cm}
\end{figure}

One can see that the statistic $N$ performs very well and gives the best results for all choices of $n$. Surprisingly, our testing framework allows one to detect non-normal behaviour in cases when other tests fail: the outcomes of measure $U$ are material in all cases. For example, for $n=50$ and $\alpha=5\%$, the value of $U$ was equal to $5.1\%$ -- this corresponds to almost $11\%$ of all rejected samples. The results are especially striking for $n=250$, where the normality assumption was rejected in almost all cases (ca. 90\%). While one might think that for such a big sample size the three classical tests should detect all abnormalities, our test still uniquely rejected normality in multiple cases. For $\alpha=1\%$, the normality was rejected for additional 262 samples ($3.8\%$ of the population). For transparency, in Figure~\ref{F:fig3} we show exemplary data subset for which this happened.


\section{Concluding remarks and other applications}\label{S:conculde}

In this paper we have shown that the test statistic $N$ introduced in~\eqref{eq:N} could be used to measure the heaviness of the tails in reference to the central part of distribution and could serve as an efficient goodness-of-fit normality test statistic. Test statistic $N$ is based on the conditional second moments, performs quite well on market financial data, and allows one to detect non-normal behaviour where other benchmark tests fail. 

As mentioned in the introduction, most empirical studies suggest that the normality tests should be chosen carefully as their statistical power varies depending on the context. Our proposal proves to have the best test power in the cases when the true distribution is assumed to be symmetric and have tails that are fatter or slimmer than the normal one. It should be noted that our test is in fact based on the implicit distribution symmetry assumption. Indeed, in \eqref{S:stat.test}, the impact of left and right tail is taken with the same weight. Nevertheless, this could be easily generalised e.g. by considering only one of the tail variances; we comment on that later.

In Theorem \ref{th:asymptotic_distr_test_stat} we proved that the asymptotic distribution of $N$ is normal under the normality null hypothesis. This allows us to study the shape of rejection intervals for sufficiently large samples. To obtain this result, in Lemma \ref{lm:asymptotic_distribution_sigma2M} we derived the asymptotic distribution of the conditional sample variance. 

Also, we showed that the 20-60-20 Rule explains the financial {\it stylised fact} related to tail non-normal behaviour and provides surprisingly accurate clustering of asset return time series. Quite surprisingly, non-normality is visible for almost 40\% of the observations. 

In summary, we believe that tail-impact tests based on the conditional second moments are very promising and provide a nice alternative to the classical framework based e.g. on the third and fourth moments.

For example, the multivariate extension of the test statistic $N$ could be defined using the results presented in \cite{JawPit2015}, e.g. to assess the adequacy of using the correlation structure for dependence modeling. Also, this could be extended to any multivariate elliptic distribution using the results from~\cite{JawPit2017}.

The construction of $N$ shows how to use conditional second moments for statistical purposes. In fact, one might introduce various other statistics that test underlying distributional assumptions. Let us present a couple of examples:

\begin{enumerate}[-]
\item We can test only the (left) low-tail impact on the central part by considering one of test statistics
\[
N_1:= \left(\frac{\hat\sigma^2_L-\hat\sigma^2_M}{\hat\sigma^2}\right)\sqrt{n},\quad
N_2:= \left(\frac{\hat\sigma^2_L-\hat\sigma^2_M}{\hat\sigma_{M}^2}\right)\sqrt{n}.
\]
\item For any quantile-based conditioning sets $A$ and $B$, and any elliptical distribution, one can introduce the statistic
\[
N_3 := \left(\frac{\hat\sigma^2_A}{\hat\sigma^2_B}-\lambda\right)\sqrt{n},
\]
where $\lambda\in\mathbb{R}$ is a constant depending on the quantiles that define conditioning sets and the underlying distribution. Assuming that $A=L$ and $B=\bR$ (whole space), we get the proportion between the tail dispersion and overall dispersion. In this specific case, in the normal framework, we get
\[
\lambda=1-\tfrac{\Phi^{-1}(0.2)\phi(\Phi^{-1}(0.2))}{0.2}-\tfrac{(\phi(\Phi^{-1}(0.2))^2}{0.2^2};
\]
see~\cite[Section 3]{JawPit2015} for details. 
\end{enumerate}
Note that under the normality assumption all proposed statistics are pivotal quantities which facilitates an easy and efficient hypothesis testing; the asymptotic distribution for all statistics could be derived using similar reasoning as the one presented in Theorem~\ref{th:asymptotic_distr_test_stat}.


\appendix
\section{Closed-form formula for the normalising constant}\label{A:rho}
In this section, we present the closed-form formula for the normalising constant $\rho$ from Theorem~\ref{th:asymptotic_distr_test_stat}. For brevity, we omit detailed calculations and only present the outcome.

To ease the notation, for any $\gamma \in [0,1]$, we set 
$
x_{\gamma}  :=\Phi^{-1}(\gamma).
$
Then, for any $A=A[\alpha,\beta]$, the standardised second, third, and fourth conditional central moments are given by
\begin{align*}
m_A^{(2)}& :=  1+ \frac{x_{\alpha}\phi(x_{\alpha})-x_{\beta}\phi(x_{\beta})}{\beta-\alpha},\\
m_A^{(3)}& := \frac{(x_{\alpha})^2\phi(x_{\alpha})-(x_{\beta})^2\phi(x_{\beta})}{\beta-\alpha} +2\frac{\phi(x_{\alpha})-\phi(x_{\beta})}{\beta-\alpha},\\
m_A^{(4)}& :=  3+ \frac{(x_{\alpha})^3\phi(x_{\alpha})-(x_{\beta})^3\phi(x_{\beta})}{\beta-\alpha} +3\frac{x_{\alpha}\phi(x_{\alpha})-x_{\beta}\phi(x_{\beta})}{\beta-\alpha}.\footnotemark
\end{align*}
\footnotetext{Recall that for $\alpha=0$ or $\beta=1$ we follow the convention $0\cdot \pm\infty =0.$}
\noindent Moreover, the standardised conditional mean, conditional variance, and conditional kurtosis are equal to
\begin{align*}
\tilde\mu_A & = \frac{\phi(x_{\alpha})-\phi(x_{\beta})}{\beta-\alpha},\\
\tilde\sigma^2_A & = m_A^{(2)} -(\tilde\mu_A)^2,\\
\tilde{\kappa}_A &=\frac{ m_A^{(4)}-3(\tilde\mu_A)^4+6(\tilde\mu_A)^2 m_A^{(2)}-4\tilde\mu_A m_A^{(3)}}{(\tilde{\sigma}^2_A)^2}.
\end{align*}
Also, recall that $\tilde{q}=\Phi(x)$, where $x$ is the unique negative solution of the equation
\[
-x\Phi(x)-\phi(x)(1-2\Phi(x))=0.
\]
Now, we are ready to present the closed-form formula for $\rho$; see Corollary~\ref{cor:rho}.
\begin{corollary}\label{cor:rho}
The normalising constant $\rho$ from Theorem~\ref{th:asymptotic_distr_test_stat} is given by
\begin{align*}
\rho:=&\sqrt{\frac{\tau^2_L}{\sigma^4} +4\frac{\tau^2_M}{\sigma^4} +\frac{\tau^2_R}{\sigma^4} -\frac{4(C_1+C_2)}{\tilde{q}(1-2\tilde{q})}+\frac{2C_3}{\tilde{q}^2}}\,,
\end{align*}
where for $A\in \{L,M,R\}$ we have
\begin{align*}
\frac{\tau^2_A}{\sigma^4} & =  \frac{1}{(\beta-\alpha)^2}\Big((\beta-\alpha)(\tilde\sigma^2_A)^2(\tilde\kappa_A-1) +\alpha(1-\alpha)\left((x_{\alpha}-\tilde\mu_A)^2-\tilde\sigma^2_A\right)^2\nonumber\\
& \quad {}+\beta(1-\beta)\left((x_{\beta}-\tilde\mu_A)^2-\tilde\sigma^2_A\right)^2  -2\alpha(1-\beta)\left((x_{\alpha}-\tilde\mu_A)^2-\tilde\sigma^2_A\right)\times\\
& \quad {}\times\left((x_{\beta}-\tilde\mu_A)^2-\tilde\sigma^2_A\right)\Big),
\end{align*}
and constants $C_1$, $C_2$, and $C_3$ are given by
\begin{align*}
C_1 & = \tilde{q}^2\left((x_{\tilde{q}}-\tilde\mu_L)^2-\tilde\sigma^2_L\right) \left((x_{\tilde{q}}+\tilde\mu_M)^2-\tilde\sigma^2_M\right) \\
&\quad {}-\tilde{q}(1-\tilde{q})\left((x_{\tilde{q}}-\tilde\mu_M)^2  -\tilde\sigma^2_M\right)\left((x_{\tilde{q}}-\tilde\mu_L)^2-\tilde\sigma^2_L\right),\nonumber\\
C_2 & = \tilde{q}^2\left((x_{\tilde{q}}+\tilde\mu_R)^2-\tilde\sigma^2_R\right)\left((x_{\tilde{q}} -\tilde\mu_M)^2-\tilde\sigma^2_M\right) \\
& \quad {}-\tilde{q}(1-\tilde{q})\left((x_{\tilde{q}}+\tilde\mu_M)^2-\tilde\sigma^2_M\right)\left((x_{\tilde{q}}+\tilde\mu_R)^2-\tilde\sigma^2_R\right),\nonumber\\
C_3 & =-\tilde{q}^2\left((x_{\tilde{q}}+\tilde\mu_R)^2-\tilde\sigma^2_R\right)\left((x_{\tilde{q}}-\tilde\mu_L)^2-\tilde\sigma^2_L\right).
\end{align*}
Approximately, the value of $\rho$ is equal to $1.7885$.
\end{corollary}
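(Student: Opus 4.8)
The plan is to read off $\rho$ from the proof of Theorem~\ref{th:asymptotic_distr_test_stat}, where it is already defined through $\rho=\tau/\sigma^2$ with $\tau$ given by \eqref{eq:tau}, and then to make every entry of the resulting quadratic form explicit. Squaring gives $\rho^2 = \sigma^{-4}\,[1,-2,1]\,\Sigma\,[1,-2,1]^T$, so the task reduces to evaluating the six distinct entries of $\Sigma/\sigma^4$ and taking the indicated combination. Expanding with $v=[1,-2,1]$ yields
\[
\rho^2=\frac{\tau_L^2}{\sigma^4}+4\frac{\tau_M^2}{\sigma^4}+\frac{\tau_R^2}{\sigma^4}-\frac{4(C_1+C_2)}{\tilde q(1-2\tilde q)}+\frac{2C_3}{\tilde q^2},
\]
where I abbreviate $C_1=\Cov(Y_1^L/\sigma^2,Y_1^M/\sigma^2)$, $C_2=\Cov(Y_1^M/\sigma^2,Y_1^R/\sigma^2)$, and $C_3=\Cov(Y_1^L/\sigma^2,Y_1^R/\sigma^2)$. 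The denominators $\tilde q(1-2\tilde q)$ and $\tilde q^2$ are exactly the products of the two factors $(\beta-\alpha)$ attached to the relevant off-diagonal entry of $\Sigma$, while the diagonal entries reduce to $\tau_A^2/\sigma^4$ because $\Var(Y_1^A)=(\beta-\alpha)^2\tau_A^2$ by Lemma~\ref{lm:asymptotic_distribution_sigma2M}. Thus the corollary follows once the three diagonal variances and the three cross-covariances are in closed form.

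For the diagonal entries I would invoke the standardisation \eqref{eq:Y.norm} established inside the proof of Theorem~\ref{th:asymptotic_distr_test_stat}, which shows that $Y_1^A/\sigma^2$ is a fixed function of the standard normal variable $\tilde X_1=(X_1-\mu)/\sigma$ depending only on $(\alpha,\beta)$. Consequently $\tau_A^2/\sigma^4$ is obtained from formula \eqref{eq:tau.A} of Lemma~\ref{lm:asymptotic_distribution_sigma2M} merely by replacing $a,b,\mu_A,\sigma_A^2,\kappa_A$ with their standardised counterparts $x_\alpha,x_\beta,\tilde\mu_A,\tilde\sigma_A^2,\tilde\kappa_A$, which are the explicit quantities recorded just before the corollary statement and derived from \eqref{eq:explicit_mu_A}--\eqref{eq:explicit_sigma_A}. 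This step is purely a substitution and reproduces the displayed expression for $\tau_A^2/\sigma^4$.

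The substantive step is the computation of $C_1$, $C_2$, $C_3$, which I expect to be the main obstacle: each $Y_1^A/\sigma^2$ in \eqref{eq:YA} splits into a quadratic-in-set term and two boundary-indicator terms, so a naive cross-covariance carries nine summands. I would exploit three simplifications. First, each of the three pieces has mean zero, so covariances equal expectations of products. Second, $L$, $M$, $R$ are disjoint, whence the product of the two quadratic-in-set terms vanishes, and on any one of these sets the indicators $\1_{\{\tilde X_1\le x_{\tilde q}\}}$ and $\1_{\{\tilde X_1\le x_{1-\tilde q}\}}$ are constant; combined with $\bE[((\tilde X_1-\tilde\mu_A)^2-\tilde\sigma_A^2)\1_{\{\tilde X_1\in A\}}]=0$ this kills every mixed quadratic--indicator term. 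Third, only indicator--indicator products survive, and these reduce to the elementary identity $\Cov(\1_{\{\tilde X_1\le x_\gamma\}},\1_{\{\tilde X_1\le x_\delta\}})=\min(\gamma,\delta)-\gamma\delta$. Evaluating this over the relevant pairs drawn from $\{\tilde q,1-\tilde q\}$ produces the coefficients $+\tilde q^2$ and $-\tilde q(1-\tilde q)$ in the two-term expressions $C_1$ and $C_2$ (adjacent sets, sharing one boundary) and the single coefficient $-\tilde q^2$ in $C_3$ (the extreme sets $L$ and $R$), each multiplied by the constant factors $(x_\gamma-\tilde\mu_A)^2-\tilde\sigma_A^2$; finally the normal symmetry $x_{1-\tilde q}=-x_{\tilde q}$ rewrites $(x_{1-\tilde q}-\tilde\mu_A)^2$ as $(x_{\tilde q}+\tilde\mu_A)^2$, matching the stated forms of $C_1$, $C_2$, $C_3$.

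It then remains to substitute the diagonal and off-diagonal expressions into the displayed formula for $\rho^2$, which is precisely the claimed closed form. The numerical value $\rho\approx1.7885$ follows by inserting $\tilde q\approx0.19809$ (the solution of \eqref{eq:eq_for_tilde_q}) together with $x_{\tilde q}=\Phi^{-1}(\tilde q)$ and the associated standardised moments, and evaluating the resulting elementary expression. I anticipate that keeping the nine-term cross-covariance bookkeeping correct, and tracking the signs arising from the $P_2$--$P_3$ boundary structure, is the only genuinely delicate part; everything else is substitution into formulas already established earlier in the section.
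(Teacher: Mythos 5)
Your proposal is correct and follows exactly the route the paper intends: read $\rho^2=\sigma^{-4}[1,-2,1]\,\Sigma\,[1,-2,1]^T$ off the proof of Theorem~\ref{th:asymptotic_distr_test_stat}, identify the diagonal entries with $\tau_A^2/\sigma^4$ via Lemma~\ref{lm:asymptotic_distribution_sigma2M} and the standardisation \eqref{eq:Y.norm}, and reduce the cross-covariances of the $Y_1^A$ to indicator--indicator terms (the disjointness and mean-zero arguments you give are exactly what makes the nine-term expansion collapse to the two-term $C_1,C_2$ and one-term $C_3$). The paper explicitly omits these calculations ``for brevity,'' so your write-up supplies precisely the missing verification, and the covariance bookkeeping, including the signs $+\tilde q^2$, $-\tilde q(1-\tilde q)$, $-\tilde q^2$ and the use of $x_{1-\tilde q}=-x_{\tilde q}$, checks out.
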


\section*{Funding}
Part of the work of the second author was supported by the National Science Centre, Poland, via project 2016/23/B/ST1/00479.

 \bibliographystyle{agsm}

 \end{document}